\newcommand{\mbs}[1]{\bm{#1}}
\newcommand{\vect}[1]{{\lowercase{\mbs{#1}}}}
\newcommand{\mat}[1]{{\uppercase{\mbs{#1}}}}
\renewcommand{\Bmatrix}[1]{\begin{bmatrix}#1\end{bmatrix}}
\newcommand{\T}{{\scriptscriptstyle\mathsf{T}}}
\renewcommand{\H}{{\scriptscriptstyle\mathsf{H}}}
\renewcommand{\Re}[1][]{\ifthenelse{\isempty{#1}}{\operatorname{Re}}{\operatorname{Re}\left(#1\right)}}
\renewcommand{\Im}[1][]{\ifthenelse{\isempty{#1}}{\operatorname{Im}}{\operatorname{Im}\left(#1\right)}}
\newcommand{\bv}{\vect{b}}
\newcommand{\cv}{\vect{c}}
\newcommand{\ev}{\vect{e}}
\newcommand{\hv}{\vect{h}}
\newcommand{\nv}{\vect{n}}
\newcommand{\uv}{\vect{u}}
\newcommand{\vv}{\vect{v}}
\newcommand{\wv}{\vect{w}}
\newcommand{\xv}{\vect{x}}
\newcommand{\yv}{\vect{y}}
\newcommand{\zv}{\vect{z}}
\newcommand{\etav}{\vect{\eta}}
\newcommand{\Am}{\mat{a}}
\newcommand{\Bm}{\mat{b}}
\newcommand{\Cm}{\mat{c}}
\newcommand{\Hm}{\mat{h}}
\newcommand{\Km}{\mat{k}}
\newcommand{\Qm}{\mat{q}}
\newcommand{\Sm}{\mat{s}}
\newcommand{\Um}{\mat{u}}
\newcommand{\Vm}{\mat{V}}
\newcommand{\Wm}{\mat{w}}
\newcommand{\Cc}{{\mathcal C}}
\newcommand{\Dc}{{\mathcal D}}
\newcommand{\Mc}{{\mathcal M}}
\newcommand{\Nc}{{\mathcal N}}
\newcommand{\Rc}{{\mathcal R}}
\newcommand{\Tc}{{\mathcal T}}
\newcommand{\Uc}{{\mathcal U}}
\newcommand{\Xc}{{\mathcal X}}
\newcommand{\CC}{\mathbb{C}}
\newcommand{\Id}{\mat{\mathrm{I}}}
\newcommand{\CN}[1][]{\ifthenelse{\isempty{#1}}{\mathcal{N}_{\mathbb{C}}}{\mathcal{N}_{\mathbb{C}}\left(#1\right)}}
\renewcommand{\P}[1][]{\ifthenelse{\isempty{#1}}{\mathbb{P}}{\mathbb{P}\left(#1\right)}}
\newcommand{\E}[1][]{\ifthenelse{\isempty{#1}}{\mathbb{E}}{\mathbb{E}\left(#1\right)}}
\renewcommand{\det}[1][]{\ifthenelse{\isempty{#1}}{\text{det}}{\text{det}\left(#1\right)}}
\newcommand{\trace}[1][]{\ifthenelse{\isempty{#1}}{\text{tr}}{\text{tr}\left(#1\right)}}
\newcommand{\rank}[1][]{\ifthenelse{\isempty{#1}}{\text{rank}}{\text{rank}\left(#1\right)}}
\newcommand{\diag}[1][]{\ifthenelse{\isempty{#1}}{\text{diag}}{\text{diag}\left(#1\right)}}
\DeclarePairedDelimiter\abs{\lvert}{\rvert}
\DeclarePairedDelimiter\Abs{\lvert}{\rvert^2}
\DeclarePairedDelimiter\norm{\lVert}{\rVert}
\DeclarePairedDelimiter\Norm{\lVert}{\rVert^2}
\DeclarePairedDelimiter\normf{\lVert}{\rVert_{\text{F}}}
\DeclarePairedDelimiter\Normf{\lVert}{\rVert^2_{\text{F}}}
\newcommand{\defeq}{\triangleq}
\newcommand{\eqdef}{\triangleq}
\newtheorem{definition}{Definition}
\newtheorem{theorem}{Theorem}
\newtheorem{lemma}{Lemma}
\newtheorem{assumption}{Assumption}
\begin{document}

\title{The DoF Region of the Multiple-Antenna \\Time Correlated Interference Channel \\ with Delayed CSIT}
\author{\authorblockN{Xinping Yi, \emph{Student Member, IEEE}, David Gesbert, \emph{Fellow, IEEE},\\  Sheng Yang, \emph{Member, IEEE},  Mari Kobayashi, \emph{Member, IEEE}}
\thanks{X.~Yi and D. Gesbert are with the Mobile Communications Dept., EURECOM, 06560 Sophia Antipolis, France (email: \{xinping.yi, david.gesbert\}@eurecom.fr).}
\thanks{S.~Yang and M. Kobayashi are with the Telecommunications Dept., SUPELEC, 91190 Gif-sur-Yvette, France (e-mail:
\{sheng.yang, mari.kobayashi\}@supelec.fr).}
\thanks{Part of this paper is to be presented in \cite{YiGe:2012}}
}

\maketitle

\begin{abstract}
 We consider the time-correlated multiple-antenna interference channel where the transmitters have (i) {\em delayed} channel state information (CSI) obtained from a latency-prone feedback channel as well as (ii) imperfect {\em current} CSIT, obtained e.g. from prediction on the basis of these past channel samples. We derive the degrees of freedom (DoF) region for the two-user multiple-antenna interference channel under such conditions. The proposed DoF achieving scheme exploits a particular combination of the space-time alignment protocol designed for fully outdated CSIT feedback channels (initially developed for the broadcast channel by Maddah-Ali et al, later extended to the interference channel by Vaze et al. and Ghasemi et al.)  together with the use of simple zero-forcing (ZF) precoders. The essential ingredient lies in the quantization and feedback of the residual interference left after the application of the initial imperfect ZF precoder. Our focus is on the MISO setting albeit extensions to certain MIMO cases are also considered.
\end{abstract}
\section{Introduction}
Although the determination of capacity region of the interference channel (IC) has been a long standing open problem, several interesting recent results shed light on the problem from various perspectives. Among these we may cite the capacity region obtained for special cases \cite{Han:81,Sato:81,Motahari:09,Shang:09}, or obtained for general channel classes in both scalar and MIMO settings up to approximations with bounded gaps \cite{Etkin:08,Karmakar:11}.  When specializing to the large SNR regime, it is known that the characterization of the full capacity region can be conveniently replaced with the determination of the so-called degree-of-freedom (DoF) region. Progress on that particular front was reported in \cite{Jafar:2007} with the derivation of the DoF region for the two-user MIMO interference channel with $M_1$, $M_2$ transmit antennas and $N_1$, $N_2$ receive antennas, where the sum DoF $\min\{M_1+M_2, N_1+N_2,\max(M_1,N_2),\max(M_2,N_1)\}$ is shown to be optimal. Most of these advances suggest achievable schemes which require the full knowledge of channel state information (CSI) at both the transmitter and receiver sides. In fact, the cruciality of CSI at the transmitter side in particular is demonstrated in such works as \cite{Huang:2012,Zhu:2012,Vaze:2009} where the DoF region is shown to shrink dramatically when zero CSIT is available.  The intermediate scenario of {\em limited} or {\em incomplete}  CSIT was also considered in \cite{Bolcskei:2009,Krishnamachari:2010}.  In \cite{Bolcskei:2009}, the rate of limited feedback needed to preserve the DoF optimality in interference alignment-enabled IC is provided. More recently, the impact of feedback delays providing the transmitter with outdated CSI over MIMO channels was considered in \cite{Maddah-Ali:10} for the broadcast channel (BC) and later extended to the IC \cite{Vaze:2011,Ghasemi:2011}.  The key contribution in \cite{Maddah-Ali:10} was to establish the usefulness of even completely outdated channel state information in designing precoders achieving significantly better DoF than what is obtained without any CSIT. Considering the worst case scenarios, including those where the feedback delay extends beyond the coherence period of the time varying fading channels, the authors in~\cite{Maddah-Ali:10} propose a space-time interference alignment-inspired strategy achieving an optimal sum DoF of 4/3 for the two-user MISO BC, in a setting when the no CSIT case yields no more than 1 DoF. The essential ingredient for the proposed scheme in \cite{Maddah-Ali:10} lies in the use of multi-slot protocol initiating with the transmission of unprecoded information symbols to the user terminals, followed by the analog forwarding of the interference overheard in the first time slot.

Recently, this strategy was generalized under similar principle to the interference channel setting \cite{Vaze:2011,Ghasemi:2011}, again establishing DoF strictly beyond the ones obtained without CSIT  in scenarios where the delayed CSIT bears no correlation with the current channel realization.

Albeit inspiring and fascinating in nature, such results nonetheless rely on the somewhat over-pessimistic assumption that no estimate for the  {\em current} channel realization is available to the transmitter. Owing to the finite Doppler spread behavior of fading channels, it is however the case in many real life situations that the past channel realizations can provide information about the current one. Therefore a scenario where the transmitter is endowed with delayed CSI in addition to some (albeit imperfect) estimate of the current channel is practical relevance. This form of combined delayed and imperfect current CSIT was recently introduced in \cite{Kobayashi:2012} for the multiple-antenna broadcast channel whereby a novel transmission scheme is proposed which extends beyond the MAT algorithm in allowing the exploitation of precoders designed based on the current CSIT estimate. The full characterization of the optimal DoF for the hybrid CSIT was reported in \cite{Yang:2012} and independently in \cite{Gou:2012}.  The key idea behind
the schemes in \cite{Kobayashi:2012,Yang:2012} lies in the modification of the MAT protocol where i) the initial time slot involves transmission of {\em precoded} symbols, followed by the forwarding of {\em residual} interference overheard in the first time slot, and ii) the taking advantage of the reduced power for the residual interference (compared with full power interference in MAT) based on a suitable quantization method and digital transmission.

In this paper, we extend the results in~\cite{Kobayashi:2012,Yang:2012} and consider the two-user time-correlated multiple-antenna interference channel. A similar hybrid CSIT scenario is considered whereby each transmitter has access to delayed channel samples for the links it is connected to, as well as possessing an imperfect estimate of the current channel. The current CSIT estimate could be obtained from, e.g., a linear prediction applied to past samples \cite{Lapidoth:2005,Caire:2010}, although the prediction aspects are not specified in this paper.
Instead, the quality level for the current CSIT estimate is simply modeled in terms of an exponent of the transmit power level, allowing DoF characterization for various ranges of current CSIT quality. Thus our model bridges between previously reported CSIT scenarios such as the pure delayed CSIT of \cite{Maddah-Ali:10,Vaze:2011,Ghasemi:2011} and the pure instantaneous CSIT scenarios. We assume each receiver has access to its own perfect instantaneous CSI and the perfect delayed CSI of other receivers (as in e.g. \cite{Maddah-Ali:10,Vaze:2011,Ghasemi:2011}), in addition to the imperfect current CSI.

In what follows we obtain the following key results:
\begin{itemize}
  \item We establish an outer bound on the DoF region for the two-user temporally-correlated MISO interference channel with perfect delayed and imperfect current CSIT, as a function of the current CSIT quality exponent. This result is initially derived for the two-antenna transmitters and then generalized.
  \item We propose two schemes which achieve the key vertices of the outer bound with perfect delayed and imperfect current CSIT. The schemes build on the principles of time-slotted protocol, starting with the ZF precoded transmission of information symbols from the two interfering transmitters simultaneously and followed by forwarding of the residual interferences. As in the BC case, the residual interference reflects on the quality of the initial precoder and can be shown to be quantized and power scaled in a suitable way to achieve the optimal DoF.
	\item Our results coincide with previously reported DoF results for the perfect CSIT setting (current CSIT of perfect quality) and pure delayed CSIT setting (current CSIT of zero quality).
    \item The DoF region of certain MIMO cases is also provided as a function of the current CSIT quality exponent and the number of receive antennas.
\end{itemize}

\textbf{Notation}: Matrices and vectors are represented as uppercase and lowercase letters, and matrix transport, Hermitian transport, inverse and determinant are denoted by $\Am^\T$, $\Am^\H$, $\Am^{-1}$ and $\det(\Am)$, respectively. $\hv^{\bot}$ is the normalized orthogonal component of any nonzero vector $\hv$. The approximation $f(P) \sim g(P)$ is in the sense of $\lim_{P \to \infty} \frac{f(P)}{g(P)}=C$, where $C$ is a constant that does not scale as $P$. $\Am \succeq 0$ means $\Am$ is symmetric positive semidefinite if $\Am$ is square and $\Am \preceq \Bm$ means $\Bm-\Am$ is symmetric positive semidefinite if both $\Am$ and $\Bm$ are squared matrices.

\section{System Model}
We consider a two-user MISO interference channel, where two transmitters each equipped with $2$ antennas\footnote{The generalization to arbitrary number of antennas is considered in Section VI.} wish to send two private messages to their respective receivers each with a single antenna, as shown in Fig.~1. The discrete time baseband signal model is given by
\begin{subequations}
\begin{align}
y(t) &= \hv_{11}^\H(t) \xv_1(t) + \hv_{12}^\H(t) \xv_2(t) + e(t) \\
z(t) &= \hv_{21}^\H(t) \xv_1(t) + \hv_{22}^\H(t) \xv_2(t) + b(t),
\end{align}
\end{subequations}
for any time instant $t$, where $\hv_{ji}(t) \in \CC^{2\times 1}$ is the channel vector from Tx-$i$ to Rx-$j$; $e(t), b(t) \sim \CN[0,1]$ are normalized additive white Gaussian noise~(AWGN) at the respective receivers; the coded input signal $\xv_i(t)$ is subject to the power constraint $\E( \norm{\xv_i(t)}^2 ) \le P$, $\forall\,t$.

\begin{figure}[htb]
\centering
\includegraphics[width=0.4\columnwidth]{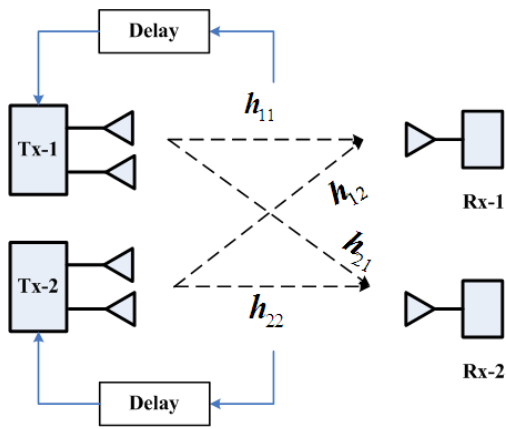}
\caption{The two-user MISO interference channel.}
\label{fig:DoF}
\end{figure}

\begin{assumption} [mutually independent fading]
At any given time instant $t$, the channel vectors $\{\hv_{ji}(t)\}$ are mutually independent and identically distributed (i.i.d.) with zero mean and covariance matrix $\Id_2$.
\end{assumption}

\begin{assumption} [perfect delayed local CSIT and imperfect current local CSIT]
  At each time instant $t$, Tx-$i$ knows perfectly the delayed local CSIT $\{{\hv}_{1i}(k),{\hv}_{2i}(k),k=1,\dots,t-1\}$ (with which link it is respectively connected), and somehow predict/estimate imperfectly the current local CSIT $\{\hat{\hv}_{1i}(t),\hat{\hv}_{2i}(t)\}$, which can be modeled by
  \begin{align}
    \hv_{ji}(t) = \hat{\hv}_{ji}(t) + \tilde{\hv}_{ji}(t)
  \end{align}
  where the estimate $\hat{\hv}_{ji}(t)$ and estimation error $\tilde{\hv}_{ji}(t)$ are independent and  assumed to be zero-mean and with variance $(1-\sigma^2)\Id_2$, $\sigma^2 \Id_2$, respectively ($0 \le \sigma^2 \le 1$).
\end{assumption}

\begin{assumption} [perfect delayed CSIR, imperfect current CSIR and perfect current local CSIR]
  At each time instant $t$, Rx-$i$ knows perfectly the delayed CSIR up to instant $t-1$ for all links, i.e., $\{{\Hm}(k)\}_{k=1}^{t-1}$, where
  \begin{align}
  {\Hm}(k) \defeq \{{\hv}_{11}(k),{\hv}_{12}(k),{\hv}_{21}(k),{\hv}_{22}(k)\},
   \end{align}
   and the imperfect current CSIR (similarly modeled as at the transmitters) up to instant $t$ for all links, i.e., $\{\hat{\Hm}(k)\}_{k=1}^{t}$, where
   \begin{align}
   \hat{\Hm}(k) \defeq \{\hat{\hv}_{11}(k),\hat{\hv}_{12}(k),\hat{\hv}_{21}(k),\hat{\hv}_{22}(k)\},
   \end{align}
    as well as the perfect current local CSIR, i.e., $\{{\hv}_{i1}(t),{\hv}_{i2}(t)\}$.
\end{assumption}

We assume that the estimation error $\sigma^2$ can be parameterized as an exponential function of the power $P$, so that we hope to characterize the DoF of the MISO IC with respect to this exponent. To this end, we introduce a parameter $\alpha \ge 0$, such that
\begin{align}
  \alpha \defeq  -\lim_{P \to \infty} \frac {\log \sigma^2}{\log P}. \label{eq:alpha-def}
\end{align}
This $\alpha$ indicates the quality of current CSIT at high SNR. While $\alpha=0$ reflects the case with no current CSIT, $\alpha \to \infty$ corresponds to that with perfect instantaneous CSIT. As a matter of fact, when $\alpha \ge 1$, the quality of the imperfect current CSIT is sufficient to avoid the DoF loss, and ZF precoding with this imperfect CSIT is able to achieve the maximum DoF~\cite{Caire:2010}. Therefore, we focus on the case $\alpha \in [0,1]$ hereafter. The connections between the above model and the linear prediction over existing time-correlated channel models with prescribed user mobility are highlighted in \cite{Kobayashi:2012}.

According to the definition of the estimated current CSIT, we have
   \begin{align}
     \E (|{\hv}_{ji}^{\H}(t) \hat{\hv}_{ji}^{\perp}(t)|^2)&=\E (|\hat{\hv}_{ji}^{\H}(t) \hat{\hv}_{ji}^{\perp}(t)|^2) + \E (|\tilde{\hv}_{ji}^{\H}(t) \hat{\hv}_{ji}^{\perp}(t)|^2)\\
     &=\E (\abs{\tilde{\hv}_{ji}^\H(t)\tilde{\hv}_{ji}(t)})\\
     &= \sigma^2 \\
     & \sim P^{-\alpha}  \label{eq:P-to-alpha}
   \end{align}
where $\E (\Norm{\hat{\hv}_{ji}^{\perp}(t)})=1$, and (\ref{eq:P-to-alpha}) is obtained from (\ref{eq:alpha-def}).


\section{The Degree of Freedom Region}
A rate pair $(R_1,R_2)$ is said to be achievable for the two-user interference channel with perfect delayed CSIT and imperfect current CSIT if there exists a $\left(2^{nR_1},2^{nR_2},n\right)$ code scheme consists of:
\begin{itemize}
  \item two message sets $[1:2^{nR_1}]$ at the Tx-1 and $[1:2^{nR_2}]$ at the Tx-2, from which two independent messages $\Mc_1$ and $\Mc_2$ intended respectively to the Rx-1 and Rx-2 are uniformly chosen;
  \item one encoding function at the Tx-$i$:
    \begin{align}
    \xv_i(t) &= f_{i} \left(\Mc_i,\{{\hv}_{1i}(k)\}_{k=1}^{t-1},\{{\hv}_{2i}(k)\}_{k=1}^{t-1},\{\hat{\hv}_{1i}(k)\}_{k=1}^t,\{\hat{\hv}_{2i}(k)\}_{k=1}^t\right) \label{eq:enc-fun};
    \end{align}
  \item and one decoding function at its corresponding receiver, e.g.,
    \begin{align}
    \hat{\Mc}_j &= g_{j} \left(\{y(t)\}_{t=1}^{n},\{\Hm(t)\}_{t=1}^{n-1},\{\hat{\Hm}(t)\}_{t=1}^{n},{\hv_{j1}}(n), {\hv_{j2}}(n)\right) \label{eq:dec-fun}
    \end{align}
    for the Rx-1 when $j=1$, and it is similarly defined for the Rx-2 by replacing $y(t)$ with $z(t)$,
\end{itemize}
such that the average decoding error probability $P_{e}^{(n)}$, defined as
\begin{align}
  P_{e}^{(n)} \defeq \E [\P[(\Mc_1, \Mc_2) \neq (\hat{\Mc}_1,\hat{\Mc}_2)]] ,
\end{align}
vanishes as the code length $n$ tends to infinity. The capacity region $\Cc$ is defined as the set of all achievable rate pairs. Accordingly, the DoF region can be defined as follows:
\begin{definition} [the degree-of-freedom region]
  The degree-of-freedom (DoF) region for two-user MISO interference channel is defined as
  \begin{align}
    \Dc &= \left\{ (d_1,d_2)\in \mathbb{R}_{+}^2 | \forall (w_1,w_2) \in \mathbb{R}_{+}^2, w_1d_1+w_2d_2 \le \limsup_{P \to \infty} \left( \sup_{(R_1,R_2) \in \Cc} \frac{w_1R_1+w_2R_2}{\log P}\right) \right\}.
  \end{align}
\end{definition}
Consequently, the DoF region for the two-user time-correlated MISO interference channel is stated in the following theorem.
\begin{theorem}
  In the two-user MISO interference channel with perfect delayed CSIT and imperfect current CSIT (as stated in Assumption~2), the optimal DoF region can be characterized by
  \begin{subequations}
  \begin{align}
    d_1 &\le 1 \label{eq:dof-bound-1}\\
    d_2 &\le 1 \label{eq:dof-bound-2}\\
    2 d_1 + d_2 &\le 2+\alpha \label{eq:dof-bound-3}\\
    d_1 + 2 d_2 &\le 2+\alpha \label{eq:dof-bound-4}.
  \end{align}
  \end{subequations}
\end{theorem}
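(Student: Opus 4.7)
The proof has a converse (outer bound) part and an achievability part. The single-user bounds $d_1\le 1$ and $d_2\le 1$ follow from the single-antenna nature of each receiver, so the substance is in the two weighted-sum bounds; by the symmetry between the users, establishing $2d_1+d_2\le 2+\alpha$ suffices, with $d_1+2d_2\le 2+\alpha$ obtained by relabeling. The region described by the four inequalities is a pentagon whose non-trivial vertices are the corner points $(1,\alpha)$, $(\alpha,1)$, and the symmetric point $\bigl((2+\alpha)/3,\,(2+\alpha)/3\bigr)$, so on the achievability side my plan is to exhibit one scheme per vertex (up to symmetry) and fill the rest by time-sharing.

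For the converse, I would combine the Maddah-Ali and Tse genie-aided argument with a quantitative entropy-difference inequality in the spirit of the hybrid-CSIT broadcast channel in \cite{Yang:2012}. Let $\Omegam^n\defeq \{\Hm(k)\}_{k=1}^{n-1}\cup\{\hat{\Hm}(k)\}_{k=1}^{n}$ collect the side information available at the receivers. Fano's inequality yields
\begin{align}
n(2R_1+R_2) \le 2\, I(\Mc_1;y^n\cond\Omegam^n) + I(\Mc_2;z^n\cond\Omegam^n) + n\epsilon_n.
\end{align}
Enhancing Rx-1 with $(\Mc_2,z^n)$ as a genie and expanding in differential entropies reduces the problem to controlling
\begin{align}
2\,h(y^n\cond\Mc_2,\Omegam^n) - h(y^n,z^n\cond\Mc_2,\Omegam^n).
\end{align}
Conditioned on $\Mc_2$ (and hence on $\xv_2^n$), the remaining channel is a two-transmit-antenna MISO broadcast channel whose current CSIT has quality $\alpha$. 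I would then argue, using a Gaussian extremal inequality together with the covariance bound $\E[\tilde{\hv}_{ji}(t)\tilde{\hv}_{ji}^\H(t)\cond\Omegam^n]\preceq \sigma^2\Id_2\sim P^{-\alpha}\Id_2$, that the above difference is bounded by $n(1+\alpha)\log P+o(\log P)$, which delivers the desired outer bound after normalizing by $\log P$.

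For achievability, one scheme targets $(1,\alpha)$ and another targets the symmetric vertex. At the corner $(1,\alpha)$, Tx-2 ZF-precodes its stream along $\hat{\hv}_{12}^{\bot}$, leaving residual interference at Rx-1 of power $P^{1-\alpha}$; Tx-1 transmits a full-rate stream which Rx-1 decodes treating this residual as noise, while Rx-2, after suppressing its own signal, sees the power-$P^{\alpha}$ ``hole'' beneath Tx-1's stream where an additional $\alpha\log P$ bits can be delivered. For the symmetric vertex I would use a three-slot protocol: in slot~1 both transmitters simultaneously send ZF-precoded information symbols based on their imperfect current CSIT, producing residual interferences of power $P^{1-\alpha}$ at the two receivers; these residuals, reconstructible at the transmitters after one slot thanks to the perfect delayed CSIT, are quantized with $(1-\alpha)\log P$ bits each and forwarded in slots~2 and 3 in such a way as to simultaneously cancel the leftover interference at one receiver and deliver a fresh equation at the other. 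A degree count then shows each user obtains $2+\alpha$ useful observations across the three slots, i.e.\ $(2+\alpha)/3$ DoF.

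The hardest step is the quantitative entropy inequality in the converse: one must turn the asymptotic statement $\sigma^2\sim P^{-\alpha}$ into a rigorous bound on conditional differential entropies that is uniform in $n$. In the pure-delayed-CSIT setting~\cite{Maddah-Ali:10,Vaze:2011,Ghasemi:2011} the corresponding inequality reduces to the clean bound based on the statistical equivalence of the two receivers' channels, but conditioning on the current estimate $\hat{\Hm}$ breaks this symmetry, so one has to decompose each $\xv_i(t)$ into its conditional mean given the current CSIT and an orthogonal residual and carefully control the cross-terms between these two components. A secondary but important technicality is verifying, on the achievability side, that the quantization distortion introduced during the forwarding phase is dominated by AWGN at every receiver, which is automatic precisely because the residuals already sit at power $P^{1-\alpha}$ and thus admit a $(1-\alpha)\log P$-bit description with distortion at the noise floor.
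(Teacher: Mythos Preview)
Your high-level architecture matches the paper's, but both halves have concrete gaps. In the converse, for the bound $2d_1+d_2\le 2+\alpha$ the genie must go to Rx-2 (providing it $y^n$) and the conditioning must be on $\Mc_1$, not the other way around. With your assignment, enhancing Rx-1 by $(\Mc_2,z^n)$ yields $2nR_1\le 2h(y^n,z^n\cond\Mc_2)+O(n)$, and combining with $nR_2\le h(z^n)-h(z^n\cond\Mc_2)$ leaves you with $2h(y^n,z^n\cond\Mc_2)-h(z^n\cond\Mc_2)$, which is of order $3n\log P$ for a full-rank input---far too loose. The paper instead conditions on $\Mc_1$ (so only $\xv_2$ survives) and reduces to bounding $h(y'(t),z'(t)\cond\cdot)-2h(y'(t)\cond\cdot)$ per slot; this is then handled not by decomposing $\xv_i$ into conditional mean plus residual as you anticipate, but by the extremal inequality of \cite{Liu:2007} together with the scalar bound of Lemma~1 in \cite{Yang:2012}, which directly gives the $\alpha\log P$ gap.

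On the achievability side, your $(1,\alpha)$ scheme as written does not reach the point: if Tx-2 uses full power with ZF precoding, the residual at Rx-1 sits at $P^{1-\alpha}$ and Rx-1, treating it as noise, recovers only $\alpha$ DoF, not $1$. The paper's scheme fixes this by rate-splitting at Tx-1 into a common part $u_c$ at power $P$ (decoded at \emph{both} receivers) plus a private part $u_p$ at power $P^{\alpha}$, and by sending $v_p$ at power $P^{\alpha}$ so its leakage to Rx-1 is at the noise floor. Similarly, your three-slot scheme omits the essential ingredient that lifts the symmetric DoF above the MAT value $2/3$: in slots~2 and~3 the paper superposes \emph{fresh} ZF-precoded private symbols $u(t),v(t)$ at power $P^{\alpha}$ underneath the forwarded (digitized) residual, and in slot~1 it reduces the power of the non-ZF stream to $P^{1-\alpha}$ so that the residual truly has power $P^{1-\alpha}$. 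Without both of these, forwarding the quantized residual alone recovers only the $2/3$ point; the extra $\alpha/3$ per user comes precisely from those superposed symbols.
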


\textbf{Remark}: Interestingly, the above DoF region is identical to that of the two-user MISO broadcast channel with perfect delayed CSIT and imperfect current CSIT \cite{Yang:2012,Gou:2012}. In fact, this result is consistent with previous results on the pure delayed CSIT case ($\alpha=0$) where it was shown that the DoF region for the two-user BC and the two-user IC coincides, and also on the special case of perfect instantaneous CSIT ($\alpha=1$).

For illustration, the DoF region for the two-user MISO IC is provided in Fig.~2. The DoF regions with no CSIT, pure perfect delayed CSIT, and perfect instantaneous CSIT are also plotted for comparison. It shows that the DoF region with perfect delayed CSIT and imperfect current CSIT is strictly larger than that with pure delayed CSIT and quickly approaches the region with perfect CSIT as the quality of current CSIT increases.

\begin{figure}[htb]
\begin{center}
\includegraphics[width=0.55\columnwidth]{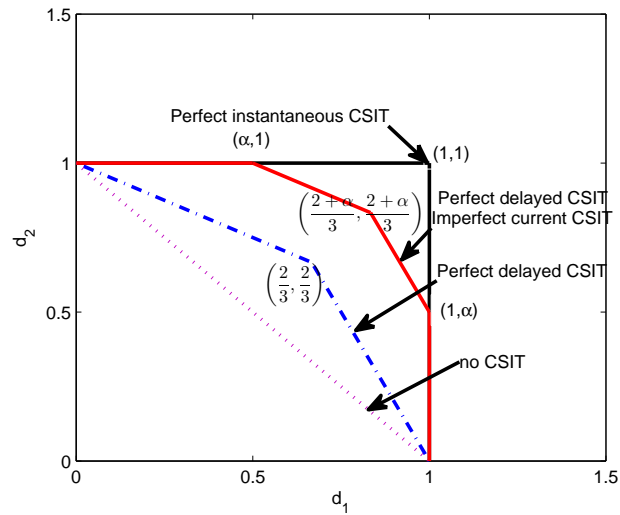}
\caption{DoF region for the two-user MISO interference channel (when $\alpha=0.5$).}
\label{fig:DoF}
\end{center}
\end{figure}

Given an $\alpha$, the DoF region is a polygon whose vertices are: $(0,1)$, $(\alpha,1)$, $\left(\frac{2+\alpha}{3},\frac{2+\alpha}{3}\right)$, $(1,\alpha)$ and $(1,0)$. In the following, we first characterize the outer bound, and then propose two schemes to show they are achievable, and in turn the entire region can be achieved by time sharing.

\section{Outer Bound}
We adopt a strategy reminisced in \cite{Vaze:2011} to obtain the genie-aided outer bound, by assuming that (i) both receivers know the CSI ${\Hm}(t)$ perfectly and instantaneously as well as the imperfect current CSI $\hat{\Hm}(t)$ at time $t$, and (ii) the Rx-2 has the instantaneous knowledge of the Tx-1's received signal $y(t)$.

Define
\begin{align}
y'(t) &\defeq \hv_{12}^\H(t) \xv_2(t) + e(t)\\
  z'(t) &\defeq \hv_{22}^\H(t) \xv_2(t) + b(t)\\
  \Tc &\defeq \{\Hm(t),\hat{\Hm}(t)\}_{t=1}^{n}\\
  \Uc(t) &\defeq \left\{\{ y'(i) \}_{i=1}^{t-1},\{ z'(i) \}_{i=1}^{t-1},\{\Hm(i)\}_{i=1}^{t-1} ,\{\hat{\Hm}(i)\}_{i=1}^{t} \right\}
\end{align}
where $\Tc$ denotes the channel state information and its estimated version available at the receivers from the beginning up to time instant $n$.

To ease our presentation, we denote:
\begin{align}
  n\epsilon_n \defeq 1+nR P_{e}^{(n)}
\end{align}
where $\epsilon_n$ tends to zero as $n \to \infty$ by the assumption that $\lim_{n \to \infty}P_{e}^{(n)}=0$. Then, we can upper-bound the achievable rate of Rx-1 by applying Fano's inequality:
{\small
\begin{align}
&nR_1 \\
&\le I(\Mc_1;\{y(t)\}_{t=1}^{n}|\Tc) + n \epsilon_n\\
&= I(\Mc_1,\Mc_2;\{y(t)\}_{t=1}^{n}|\Tc) - I(\Mc_2;\{y(t)\}_{t=1}^{n}|\Mc_1,\Tc) + n \epsilon_n  \\
&\le n \log P - I(\Mc_2;\{y(t)\}_{t=1}^{n}|\Mc_1,\Tc) +n \cdot O(1) + n \epsilon_n \label{eq:R-1-bound}\\
&= n \log P - h(\{y(t)\}_{t=1}^{n}|\Mc_1,\Tc) + h(\{y(t)\}_{t=1}^{n}|\Mc_1,\Mc_2,\Tc) +n \cdot O(1) + n \epsilon_n\\
&= n \log P - h(\{y(t)\}_{t=1}^{n}|\Mc_1,\Tc) +n \cdot O(1) + n \epsilon_n \label{eq:epsilon}\\
&= n \log P - h(\{ y'(t) \}_{t=1}^{n}|\Tc) +n \cdot O(1) + n \epsilon_n  \label{eq:R-1-remove-m1}\\
&\le n \log P - \sum_{t=1}^n h( y'(t)|\Tc,\{ y'(i) \}_{i=1}^{t-1},\{ z'(i) \}_{i=1}^{t-1}) +n \cdot O(1) + n \epsilon_n \label{eq:R-1-remove-m2}\\
&= n \log P - \sum_{t=1}^n h( y'(t)|\Uc(t),\Hm(t)) +n \cdot O(1) + n \epsilon_n \label{eq:R1-bound}
\end{align}
}
where (\ref{eq:R-1-bound}) follows the fact that the rate of the point-to-point MISO channel (i.e., Tx-1 together with Tx-2 are treated as the transmitter by cooperation while Rx-1 as the receiver) is bounded by $n\log P + n \cdot O(1)$; (\ref{eq:epsilon}) is due to the fact that (a) transmitted signals $\{\xv_i(t)\}_{t=1}^n$ are determined given messages, channel matrices up to $n$ and the encoding functions defined in~(\ref{eq:enc-fun}), (b) translation does change differential entropy, and (c) noise is independent of the channel matrices, the transmitted signals and the messages; (\ref{eq:R-1-remove-m1}) is obtained because (a) the transmitted signals $\{\xv_1(t)\}_{t=1}^n$ are determined provided the channel matrices, $\Mc_1$ and the encoding functions according to~(\ref{eq:enc-fun}), and (b) translation preserves differential entropy; (\ref{eq:R-1-remove-m2}) follows the chain rule of differential entropy and the fact that conditioning reduces differential entropy; the last equality is obtained due to $y'(t)$ is independent of $\{\Hm(k)\}_{k=t+1}^n$ and $\{\hat{\Hm}(k)\}_{k=t+1}^n$.

By applying Fano's inequality, we then also upper-bound the achievable rate of Rx-2 as
\begin{align}
&nR_2 \\
&\le I(\Mc_2;\{y(t)\}_{t=1}^{n},\{z(t)\}_{t=1}^{n}|\Tc) + n \epsilon_n\\
&\le I(\Mc_2;\{y(t)\}_{t=1}^{n},\{z(t)\}_{t=1}^{n},\Mc_1|\Tc) + n \epsilon_n \\
&= I(\Mc_2;\{y(t)\}_{t=1}^{n},\{z(t)\}_{t=1}^{n}|\Mc_1,\Tc) + n \epsilon_n \label{eq:R-2-chain-rule}\\
&= I(\Mc_2;\{y'(t)\}_{t=1}^{n},\{z'(t)\}_{t=1}^{n}|\Tc) + n \epsilon_n \label{eq:R2-remove-M1}\\
&= \sum_{t=1}^n I(\Mc_2;y'(t),z'(t)|\Tc,\{ y'(i) \}_{i=1}^{t-1},\{ z'(i) \}_{i=1}^{t-1}) + n \epsilon_n \nonumber \\
&\le \sum_{t=1}^n  I(\xv_2(t);y'(t),z'(t)|\Tc,\{ y'(i) \}_{i=1}^{t-1},\{ z'(i) \}_{i=1}^{t-1}) + n \epsilon_n \label{eq:R2-markov-chain}\\
&= \sum_{t=1}^n \left(h(y'(t),z'(t)|\Tc,\{ y'(i) \}_{i=1}^{t-1},\{ z'(i) \}_{i=1}^{t-1}) \right. \nonumber \\
&~~~~\left. -h(y'(t),z'(t)|\xv_2(t),\Tc,\{ y'(i) \}_{i=1}^{t-1},\{ z'(i) \}_{i=1}^{t-1} ) \right) + n \epsilon_n \label{eq:R2-noise-independence}\\
&\le \sum_{t=1}^n h(y'(t),z'(t)|\Tc,\{ y'(i) \}_{i=1}^{t-1},\{ z'(i) \}_{i=1}^{t-1})  + n \epsilon_n \label{eq:last-eq-2}\\
&=\sum_{t=1}^n h(y'(t),z'(t)|\Uc(t),\Hm(t)) + n \epsilon_n \label{eq:R2-bound}
\end{align}
where (\ref{eq:R-2-chain-rule}) is obtained because of the chain rule of mutual information and the independence between $\Mc_1$ and $\Mc_2$; (\ref{eq:R2-remove-M1}) is due to (a) the transmitted signals $\{\xv_1(t)\}_{t=1}^n$ are determined given message $\Mc_1$, channel matrices and encoding functions, and
(b) $\Mc_2$ and $\{y'(t),z'(t)\}$ are independent of $\Mc_1$; (\ref{eq:R2-markov-chain}) is obtained by Markov chain $\Mc_2 \to \xv_2(t) \to \{y'(t),z'(t)\}$ and the data processing inequality; (\ref{eq:last-eq-2}) is because (a) translation does not change differential entropy, (b) Gaussian noise terms are independent from instant to instant, and are also independent of the channel matrices and the transmitted signals, and (c) the differential entropy of Gaussian noise is nonnegative; and the last equality is obtained due to the independence $\{y'(t),z'(t)\}$ of $\{\Hm(k)\}_{k=t+1}^n$ and $\{\hat{\Hm}(k)\}_{k=t+1}^n$.

According to the Markov chain $\{\xv_2(t)\}_{t=1}^n \to \left(\{y(t)\}_{t=1}^n, \{z(t)\}_{t=1}^n\right) \to \{y(t)\}_{t=1}^n$, we upper-bound the weighted sum rate as
\begin{align}
  &n(2R_1+R_2) \\
  &\le 2n \log P + \sum_{t=1}^n \left(h(y'(t),z'(t)|\Uc(t),\Hm(t)) - 2 h(y'(t)|\Uc(t),\Hm(t))\right) + n \cdot O(1) + n \epsilon_n \label{eq:R1-2R2}.
\end{align}

Before preceding further, we introduce the following lemma stated in~\cite{Yang:2012}.
\begin{lemma} [\cite{Yang:2012}]
  For an $m \times 1$ random vector $\hv = \hat{\hv}+\tilde{\hv}$ where $\tilde{\hv} \sim \CN(0,\sigma^2 \Id_m)$ is independent of $\hat{\hv}$, given any $\Km \succeq 0$ with eigenvalues $\lambda_1 \ge \cdots \ge \lambda_m$, we have the following upper and lower bounds:
  \begin{align}
    \log (1+e^{-\gamma} \sigma^2 \lambda_1) + O(1) &\le \E_{\tilde{\hv}} \log (1+\hv^\H \Km \hv) \le \log (1+\norm{\hat{\hv}}^2 \lambda_1) + O(1).
  \end{align}
  The difference of the upper and lower bounds can be further bounded by
  \begin{align}
    &\log (1+\norm{\hat{\hv}}^2 \lambda_1) - \log (1+e^{-\gamma} \sigma^2 \lambda_1) \le - \log (\sigma^2) + O(1)
  \end{align}
  where $\gamma$ is Euler's constant.
\end{lemma}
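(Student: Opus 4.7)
The plan is to prove the two bounds via complementary uses of Jensen's inequality---the upper bound from the concavity of $\log(\cdot)$, and the lower bound from the convexity of $x\mapsto \log(1+e^x)$ after first projecting onto the top eigen-direction of $\Km$. The difference bound then follows by algebra. The main technical hurdle will be a uniform lower bound on $\E\log|s|^2$ when $s$ is a complex Gaussian with an arbitrary (possibly nonzero) mean; everything else reduces to elementary eigenvalue manipulations.

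For the upper bound, I would first compute $\E_{\tilde{\hv}} \hv^\H \Km \hv = \hat{\hv}^\H \Km \hat{\hv} + \sigma^2 \trace(\Km)$, using independence of $\hat{\hv}$ and $\tilde{\hv}$ and $\E \tilde{\hv}\tilde{\hv}^\H = \sigma^2 \Id_m$. Jensen's inequality (concavity of $\log$) then gives $\E_{\tilde{\hv}}\log(1+\hv^\H\Km\hv) \le \log(1 + \hat{\hv}^\H \Km \hat{\hv} + \sigma^2 \trace(\Km))$. Bounding $\hat{\hv}^\H \Km \hat{\hv}\le \lambda_1\|\hat{\hv}\|^2$ and $\trace(\Km)\le m\lambda_1$, then splitting via $\log(1+a+b)\le \log(1+a)+\log(1+b)$, yields $\log(1+\|\hat{\hv}\|^2\lambda_1)+\log(1+\sigma^2 m\lambda_1)$; in the operating regime where $\sigma^2 m$ is of the same order as $\|\hat{\hv}\|^2$ (i.e., the typical scaling of the estimate), the residual term is absorbed into the $O(1)$ constant.

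For the lower bound, I decompose $\Km = \sum_i \lambda_i \uv_i\uv_i^\H$ in an orthonormal basis $\{\uv_i\}$ and retain only the top-eigenvalue contribution: $\hv^\H\Km\hv \ge \lambda_1 |\uv_1^\H\hv|^2$. Writing $s\defeq \uv_1^\H \hv=\uv_1^\H\hat{\hv}+\uv_1^\H\tilde{\hv}$---a complex Gaussian with mean $a\defeq \uv_1^\H \hat{\hv}$ and variance $\sigma^2$---the convexity of $x\mapsto \log(1+e^x)$ combined with Jensen's in the reverse direction delivers $\E_{\tilde{\hv}}\log(1+\lambda_1|s|^2)\ge \log(1+\lambda_1 e^{\E_{\tilde{\hv}}\log|s|^2})$. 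It remains to establish $\E_{\tilde{\hv}}\log|s|^2\ge \log\sigma^2-\gamma$ uniformly in $a$; for this I would expand the density of $|s|^2/\sigma^2$ (a non-central chi-square with two degrees of freedom) as a Poisson mixture of central chi-squares, computing term-by-term to obtain
\begin{align*}
\E_{\tilde{\hv}}\log|s|^2 = \log\sigma^2 - \gamma + e^{-|a|^2/\sigma^2}\sum_{k\ge 1}\frac{(|a|^2/\sigma^2)^k}{k!}\, H_k,
\end{align*}
with $H_k\defeq\sum_{j=1}^k 1/j$. The correction term is manifestly nonnegative, yielding the desired lower bound.

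The final difference bound is algebraic: when $\lambda_1$ is large, $\log(1+\|\hat{\hv}\|^2\lambda_1) - \log(1+e^{-\gamma}\sigma^2\lambda_1) \approx \log(\|\hat{\hv}\|^2/(e^{-\gamma}\sigma^2)) = -\log\sigma^2 + O(1)$ under the boundedness of $\|\hat{\hv}\|^2$, while for $\lambda_1 = O(1)$ both sides are constants and the inequality is trivial. The genuine obstacle is the uniform bound $\E\log|s|^2 \ge \log\sigma^2 - \gamma$ for the non-centered complex Gaussian $s$; the Poisson-mixture representation together with the digamma identity $\psi(k+1) = -\gamma + H_k$ is the cleanest route, but some care is required to justify the sum--integral exchange via dominated convergence and to verify non-negativity of every term in the series.
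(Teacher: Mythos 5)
Your lower bound and your difference bound are sound, and they take a genuinely different route from the paper: the paper does not prove this lemma at all (it is imported from Yang et al.), and its appendix proof of the MIMO analogue (Lemma 4) gets the lower bound via Minkowski's inequality, the ``Ricean $\ge$ Rayleigh'' monotonicity of the log-determinant, and Goodman's formula for $\E\ln\det$ of a central Wishart matrix. Your rank-one reduction $\hv^\H\Km\hv\ge\lambda_1|\uv_1^\H\hv|^2$ followed by Jensen on the convex map $x\mapsto\log(1+e^x)$ and the Poisson-mixture identity $\E\log|s|^2=\log\sigma^2-\gamma+e^{-\mu}\sum_{k\ge1}\frac{\mu^k}{k!}H_k\ge\log\sigma^2-\gamma$ (with $\mu=|a|^2/\sigma^2$, $\psi(k+1)=-\gamma+H_k$) is a self-contained scalar substitute for those citations; the sum--integral exchange is unproblematic since the mixture weights and the central $\chi^2_{2(k+1)}$ log-moments combine into an absolutely convergent series, so this part is fine and arguably cleaner in the $m\times1$ case.

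The genuine gap is in your upper bound, at the split $\log(1+a+b)\le\log(1+a)+\log(1+b)$: the residual term $\log(1+m\sigma^2\lambda_1)$ is \emph{not} $O(1)$. For the lemma to serve its purpose the $O(1)$ must be uniform in $\lambda_1$ and $\sigma^2$; in the application $\lambda_1(\Km_u(t))$ can scale like $P$ (only $\trace(\Km_u)\le P$ is imposed) while $\sigma^2\sim P^{-\alpha}$, so your residual grows like $(1-\alpha)\log P$, and carried into the difference bound it would yield $\log P$ rather than $\alpha\log P$, destroying the $2d_1+d_2\le 2+\alpha$ outer bound. The justification you give (``$\sigma^2 m$ of the same order as $\|\hat{\hv}\|^2$'') is also self-defeating: if that were true the residual would be comparable to the main term, not absorbable. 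The repair is one line: bound the gap multiplicatively,
\begin{align}
\log\left(1+\lambda_1\left(\Norm{\hat{\hv}}+m\sigma^2\right)\right)-\log\left(1+\lambda_1\Norm{\hat{\hv}}\right)
\le \log\left(1+\frac{m\sigma^2}{\Norm{\hat{\hv}}}\right)\le \log\left(1+\frac{m}{\Norm{\hat{\hv}}}\right),
\end{align}
using $\sigma^2\le1$; this constant is independent of $\lambda_1$ and $\sigma^2$ (it depends on $\hat{\hv}$ only, exactly as in the paper's handling of the corresponding step in the proof of Lemma 4). With that substitution your argument establishes the lemma.
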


With the definitions
\begin{align}
  \Sm(t) &\eqdef \Bmatrix{\hv_{12}^\H(t) \\ \hv_{22}^\H(t) }\\
  \hat{\Sm}(t) &\eqdef \Bmatrix{\hat{\hv}_{12}^\H(t) \\ \hat{\hv}_{22}^\H(t) }\\
  \wv(t) &\eqdef [e(t) \  b(t) ]^\T\\
  \Km(t) &\eqdef \E\{\xv_2(t)\xv_2^{\H}(t)|\Uc(t)\}
\end{align}
we further upper-bound the weighted difference of two conditional differential entropies that derived above, i.e.,
{\small
\begin{align}
  &h(y'(t),z'(t)|\Uc(t),\Hm(t)) - 2 h(y'(t)|\Uc(t),\Hm(t))\\
  &=h(y'(t),z'(t)|\Uc(t),\Sm(t)) - 2 h(y'(t)|\Uc(t),\Sm(t)) \label{eq:gaussian-input-1}\\
  &\le \max_{\substack{\Cm \succeq 0,\\ \trace(\Cm) \le P}} \max_{\substack{p(\Uc(t)),\\p(\xv_2(t)|\Uc(t)) \\ \Km(t) \preceq \Cm}} (h(y'(t),z'(t)|\Uc(t),\Sm(t)) - 2 h(y'(t)|\Uc(t),\Sm(t)))\\
  &= \max_{\substack{\Cm \succeq 0,\\ \trace(\Cm) \le P}} \max_{\substack{p(\Uc(t))\\ \Km_u(t) \preceq \Cm}} ( h( \Sm(t) \uv(t)+\wv(t) |\Uc(t),\Sm(t)) - 2 h( \hv_{12}^\T(t) \uv(t)+e(t) |\Uc(t),\Sm(t)) ) \label{eq:gaussian-input-2}\\
  &=  \max_{\substack{\Cm \succeq 0,\\ \trace(\Cm) \le P}}\max_{\substack{p(\hat{\Sm}(t))\\ \Km_u(t) \preceq \Cm}} ( h( \Sm(t) \uv(t)+\wv(t) |\Sm(t),\hat{\Sm}(t)) - 2 h( \hv_{12}^\H(t) \uv(t)+e(t) |\Sm(t),\hat{\Sm}(t)) ) \label{eq:gaussian-input-3}\\
  &=  \max_{\substack{\Cm \succeq 0,\\ \trace(\Cm) \le P}} \max_{\substack{p(\hat{\Sm}(t))\\ \Km_u(t) \preceq \Cm}} \E_{\Sm(t),\hat{\Sm}(t)} (\log \det (\mathbf I + \Sm(t) \Km_u(t) \Sm^{\H}(t) ) - 2 \log(1+\hv_{12}^\H(t) \Km_u(t) \hv_{12}(t))) \label{eq:expectation-bound1}\\
   &\le \E_{\hat{\Sm}(t)}  \max_{\substack{\Cm \succeq 0,\\ \trace(\Cm) \le P}}\max_{\substack{p(\hat{\Sm}(t))\\ \Km_u(t) \preceq \Cm}} \E_{\Sm(t)|\hat{\Sm}(t)} (\log \det (\mathbf I + \Sm(t) \Km_u(t) \Sm^{\H}(t) ) - 2 \log(1+\hv_{12}^\H(t) \Km_u(t) \hv_{12}(t))) \label{eq:expectation-bound3}\\
   &\le \E_{\hat{\Sm}(t)}  \max_{\substack{\Cm \succeq 0,\\ \trace(\Cm) \le P}}\max_{\substack{p(\hat{\Sm}(t))\\ \Km_u(t) \preceq \Cm}} \E_{\Sm(t)|\hat{\Sm}(t)} (\log (1+\hv_{22}^\H(t) \Km_u(t) \hv_{22}(t))) - \log(1+\hv_{12}^\H(t) \Km_u(t) \hv_{12}(t))) \label{eq:expectation-bound4}\\
   &\le \E_{\hat{\Sm}(t)}  \max_{\substack{\Cm \succeq 0,\\ \trace(\Cm) \le P}}\max_{\substack{p(\hat{\Sm}(t))\\ \Km_u(t) \preceq \Cm}} \left(\log (1+\Norm{\hat{\hv}_{22}(t)} \lambda_1(\Km_u(t))) - \log \left(1+ e^{-\gamma} \sigma^2 \lambda_1(\Km_u(t))\right)\right) + O(1)\\
   & \le \alpha \log P + O(1) \label{eq:diff-bound}
\end{align}
}
where in (\ref{eq:gaussian-input-1}) $\Hm(t)$ is replaced by $\Sm(t)$ because of the independence of $\{y'(t),z'(t)\}$; (\ref{eq:gaussian-input-2}) is obtained because Gaussian distributed vector $\uv(t)$ maximizes the weighted difference of two differential entropies over all conditional distribution of $\xv_2(t)$ with the same covariance matrix constraint, where $\Km_u(t) \defeq \E\{\uv(t)\uv^{\H}(t)\} =  \max_{p(\Uc(t))} \Km(t) $~\cite{Liu:2007}; (\ref{eq:gaussian-input-3}) is because $\uv(t)$, $\Sm(t)$ and $\wv(t)$ are independent of $\Uc(t)$ except $\hat{\Sm}(t)$; (\ref{eq:expectation-bound1}) is obtained because $\uv(t)$ is Gaussian distributed and independent of $\{\Hm(t)\}_{t=1}^{n}$, $\{\hat{\Hm}(t)\}_{t=1}^{n}$ as well as the noise terms; (\ref{eq:expectation-bound3}) follows the fact that putting the expectation out of the maximization increases the value; (\ref{eq:expectation-bound4}) follows from the inequality $\det(\Id+\Am) \le \prod_{i=1}^m \left(1+a_{ii}\right)$ where $\Am$ is an $m \times m$ positive semidefinite matrix with entry $a_{ij}$; the last two inequalities are according to Lemma~1 and the quality of current CSIT ($\sigma^2 \sim P^{-\alpha}$).

Accordingly, we obtain an upper bound of $2R_1+R_2$ from (\ref{eq:R1-2R2}) and (\ref{eq:diff-bound}), i.e.,
\begin{align}
  &n(2R_1+R_2) \le n (2+\alpha) \log P + n \cdot O(1) + n \epsilon_n
\end{align}
as $n\to \infty$, from which (\ref{eq:dof-bound-3}) is obtained according to the definition of DoF.

By exchanging the roles of Tx-1/Rx-1 and Tx-2/Rx-2, another inequality (\ref{eq:dof-bound-4}) can be similarly obtained by assuming Rx-1 has the instantaneous knowledge of $z(t)$, where the weighted rate is bounded as
\begin{align}
  &n(R_1+2R_2) \le n (2+\alpha) \log P + n \cdot O(1) + n \epsilon_n.
\end{align}

Together with the first two bounds~(\ref{eq:dof-bound-1}) and (\ref{eq:dof-bound-2}) which are obtained by the constraint of antenna configuration, the DoF region is completely characterized.

\section{Achievability}

With perfect delayed CSIT, the authors in~\cite{Vaze:2011} and~\cite{Ghasemi:2011} characterize the DoF region for two-user MIMO interference channel, bridging between the case with no CSIT~\cite{Huang:2012,Zhu:2012,Vaze:2009} and that with perfect instantaneous CSIT~\cite{Jafar:2007}. Particularly, for two-user MISO case, the DoF pair $(\frac{2}{3},\frac{2}{3})$ is achievable with delayed CSIT, strictly larger than $(\frac{1}{2},\frac{1}{2})$ achieved with no CIST and dominated by $(1,1)$ with perfect CSIT.

The technique exploits the advantage of interference alignment in the time domain by utilizing the delayed CSIT together with the space domain, which is referred to as MAT alignment~\cite{Maddah-Ali:10,Ghasemi:2011,Vaze:2011}. We first briefly review its application in the interference channel.
\subsection{MAT in the Interference Channel}
The MAT alignment in the interference channel is an extension from the broadcast channel, taking into account the distributive and uncooperative nature of the transmitters~\cite{Ghasemi:2011}. The two-phase protocol which consumes three time slots is described as follows:
\subsubsection*{Phase-I} In this phase, each Tx transmits two independent encoded symbols to its intended receiver without precoding during a single time slot, i.e.,
  \begin{subequations}
  \begin{align}
    \xv_1(1) &= \uv\\
    \xv_2(1) &= \vv
  \end{align}
  \end{subequations}
  and the received signals at both receivers are
  \begin{subequations}
  \begin{align}
    y(1) &= \hv_{11}^\H(1) \uv + \underbrace{\hv_{12}^\H(1) \vv}_{\eta_1} + e(1) \\
    z(1) &= \underbrace{\hv_{21}^\H(1) \uv}_{\eta_2} + \hv_{22}^\H(1) \vv + b(1),
  \end{align}
  \end{subequations}
  where $\eta_1$ and $\eta_2$ are interference terms overheard at the Tx-1 and Tx-2, respectively.
\subsubsection*{Phase-II} At the end of phase-I, the delayed CSIT $\{\hv_{21}(1)\}$ is available at the Tx-1, while $\{\hv_{12}(1)\}$ is accessible at the Tx-2. Together with the transmitted symbols, the overheard interference terms are reconstructible at both Txs. By retransmitting the overheard interference terms $\eta_2=\hv_{21}^\H(1)\uv$ at the Tx-1 and $\eta_1=\hv_{12}^\H(2)\vv$ at the Tx-2 with time division, i.e.,
      \begin{subequations}
      \begin{align}
        \xv_1(2) &= \Bmatrix{\eta_2 \\ 0} \\
        \xv_2(2) &= \mathbf 0
      \end{align}
      \end{subequations}
      and
      \begin{subequations}
      \begin{align}
        \xv_1(3)&=\mathbf 0 \\
        \xv_2(3)&=\Bmatrix{\eta_1\\0}
      \end{align}
      \end{subequations}
      where two entire time slots are consumed, we cancel the interference terms $\eta_1$ and $\eta_2$ at the Rx-1 and Rx-2, and importantly provide another linear combination of $\uv$ (from $\eta_2$) and $\vv$ (from $\eta_1$) to the Rx-1 and Rx-2, respectively. By the end of phase-II, both receivers are able to recover their own symbols with high probability.
The key idea behind is interference repetition and alignment in both space and time domain. At each receiver, the mutual interference aligns in one dimension, while the desired signal spans in a two-dimensional space. This enables each receiver to retrieve the desired signal from a three-dimensional space.

\subsection{Integrating the Imperfect Current CSIT}
The MAT alignment takes into account the completely outdated CSIT, regardless of the correlation between current and previous channel states. As a matter of fact, such an assumption on the delayed CSIT is over-pessimistic, since the current CSI can be predicted from the past states if the underlying channel exhibits some temporal correlation. Recent results demonstrate that the DoF region can be enlarged in broadcast channel by using estimated current CSIT, even it is imperfect~\cite{Kobayashi:2012,Yang:2012}.

In the following, two schemes are proposed, demonstrating the larger DoF region can be achieved by utilizing estimated current CSIT exploited from time correlation model. Instead of forwarding the interference terms in an analog fashion~\cite{Ghasemi:2011,Vaze:2011}, we first quantize the interference and then retransmit the quantized version. By utilizing the imperfect current CSIT for precoding, the interference terms are efficiently compressed with quantization.

In the following two schemes, we demonstrate the vertices $\left(\frac{2+\alpha}{3},\frac{2+\alpha}{3}\right)$ and $(1,\alpha)$ are all achievable. Note that we simply use $\hat{\hv}_{ji}(t)$ for the range space while $\hat{\hv}_{ji}^{\perp}(t)$ for the null space of $\hat{\hv}_{ji}(t)$. The precoder design to improve the achievable rate is out of scope of this paper.

\subsubsection{\textbf{Achievability of $\left(\frac{2+\alpha}{3},\frac{2+\alpha}{3}\right)$}}
Inspired by the enhanced scheme for the two-user MISO broadcast channel~\cite{Yang:2012}, a 3-time-slotted protocol, which achieves the vertex $\left(\frac{2+\alpha}{3},\frac{2+\alpha}{3}\right)$ of the DoF region for the two-user MISO interference channel, is detailed as follows.
\subsubsection*{Slot-1}
   In the first time slot, the symbol vectors $\uv(1)$ and $\vv(1)$ are respectively sent from the two transmitters with precoding, heading to their corresponding receivers:
  \begin{subequations}
  \begin{align}
    \xv_1(1) &= [\hat{\hv}_{21}(1) \ \hat{\hv}_{21}^{\perp}(1)] \uv(1)\\
    \xv_2(1) &= [\hat{\hv}_{12}(1) \ \hat{\hv}_{12}^{\perp}(1)] \vv(1)
  \end{align}
  \end{subequations}
  where $\uv(1)=[u_1(1) \ u_2(1)]^{\T}$, $\vv(1)=[v_1(1) \ v_2(1)]^{\T}$ satisfy $\E(\Norm{\uv(1)})=\E(\Norm{\vv(1)}) \le P$. The received signal at both receivers are then given as:
  \begin{subequations}
    \begin{align}
    y(1) &= \hv_{11}^\H(1) \xv_1(1) + {\eta}_1 + e(1) \\
    z(1) &= \hv_{22}^\H(1) \xv_2(1) + {\eta}_2 + b(1),
    \end{align}
  \end{subequations}
  where ${\eta}_1$ and ${\eta}_2$ are interferences overheard at the Rx-1 and Rx-2 respectively, i.e.,
  \begin{subequations}
    \begin{align}
    {\eta}_1 &= {\hv}_{12}^{\H}(1) \hat{\hv}_{12}(1) v_1(1)+{\hv}_{12}^{\H}(1) \hat{\hv}_{12}^{\perp}(1) v_2(1)\\
    {\eta}_2 &= {\hv}_{21}^{\H}(1) \hat{\hv}_{21}(1) u_1(1)+{\hv}_{21}^{\H}(1) \hat{\hv}_{21}^{\perp}(1) u_2(1).
    \end{align}
  \end{subequations}
  According to (\ref{eq:P-to-alpha}), i.e., $\E (\Abs{{\hv}_{ji}^{\H}(1) \hat{\hv}_{ji}^{\perp}(1)}) \sim P^{-\alpha}$, we can make $\E(|{\eta}_1|^2)=\E(|{\eta}_2|^2) \sim P^{1-\alpha}$ by allocating $\E(|u_1(1)|^2)=\E(|v_1(1)|^2) = P^{1-\alpha}$ whereas $\E(|u_2(1)|^2)=\E(|v_2(1)|^2) = P-P^{1-\alpha} \sim P$.


  At the end of slot-1, Tx-1 can reconstruct $\eta_2 ={\hv}_{21}^\H(1)\xv_2(1)$ while Tx-2 can reconstruct $\eta_1 = {\hv}_{21}^{\H}(1) \xv_1(1)$. Instead of forwarding the interferences in an analog fashion, we first quantize the interference term $\eta_i$ into $\hat{\eta}_i$ with $(1-\alpha) \log P$ bits each, then encode the index of $\hat{\eta}_i$ to codeword $c_i$ using a Gaussian channel codebook, and forward $c_i$ as a common message to both receivers in the ensuing two time slots. To ease our presentation, the process of the encoding and decoding of $c_i$ is omitted hereafter, making it look as if the codeword $\hat{\eta}_i$ itself is conveyed to the receivers\footnote{The simplification does not affect our results as long as we consider DoF region. For the rate region, the general scheme and more rigorous proof can be straightforwardly extended from~\cite{Yang:2012}.}.

  The source codebook $\Xc_1$ (resp.~$\Xc_2$) is generated for ${\eta}_2$ (resp.~ ${\eta}_1$) and maintained at the Tx-1 (resp.~Tx-2). The entry $\hat{\eta}_i$ in codebook $\Xc_i$ satisfies
  \begin{align}
    {\eta}_i=\hat{\eta}_i+\Delta_i
  \end{align}
  where $\Delta_i$ is the quantization error with distortion $\E(|\Delta_i|^2) \sim \sigma^2_{{\eta}_i} D$ and independent of $\hat{\eta}_i$. According to the rate distortion theory\cite{Cover:2006}, we let the normalized distortion $D$ decay as $P^{-(1-\alpha)}$ (in turn $\E(|\Delta_i|^2) \sim P^{0}$) so that each receiver can decode it successfully and the quantization error is drowned in the noise.

  \subsubsection*{Slot-2}  During the second time slot, the index corresponding to $\hat{\eta}_2$ is encoded to $c_2$ and sent from Tx-1 as a common message together with a new symbol $u(2)$ with ZF precoding, while a new symbol $v(2)$ intended to Rx-2 is instantaneously sent from Tx-2 with ZF precoding as well. By omitting the encoding and decoding process of $c_2$, the equivalent transmitted signals can be written as
      \begin{subequations}
      \begin{align}
        \xv_1(2) &= \Bmatrix{P^{\alpha/2} \hat{\eta}_2 \\ 0} + \hat{\hv}_{21}^{\perp}(2) u(2)\\
        \xv_2(2) &= \hat{\hv}_{12}^{\perp}(2) v(2).
      \end{align}
      \end{subequations}
      where the cordword $\hat{\eta}_2$ is power scaled with $P^{\alpha/2}$ to ensure it can be recovered from noisy observation.
      To avoid interference from the other transmitters, we assume the new symbols $u(2)$ and $v(2)$ satisfy the power constraint $\E (|u(2)|^2)=\E (|v(2)|^2) \le P^{\alpha}$. The received signals at both receivers are given as:
      \begin{subequations}
        \begin{align}
        y(2) &= \underbrace{h_{11,1}^{*}(2) {P^{\alpha/2} \hat{\eta}_2 }}_{t_{11}} + \underbrace{\hv_{11}^\H(2) \hat{\hv}_{21}^{\perp}(2) u(2)}_{t_{12}} + \underbrace{\hv_{12}^\H(2) \hat{\hv}_{12}^{\perp}(2) v(2)}_{t_{13}} + e(2) \\
        z(2) &= {h_{21,1}^{*}(2) {P^{\alpha/2} \hat{\eta}_2  }}+ {\hv_{22}^\H(2) \hat{\hv}_{12}^{\perp}(2) v(2)}+ {\hv_{21}^\H(2) \hat{\hv}_{21}^{\perp}(2) u(2)} + b(2).
        \end{align}
      \end{subequations}
      Note that in the received signal $y(2)$, $\E (|t_{11}|^2) \sim P$, $\E (|t_{12}|^2) \sim P^{\alpha}$, while $\E (|t_{13}|^2) \sim P^0$ is at noise level. With distortion $D \sim P^{-(1-\alpha)}$, both receivers can retrieve $\hat{\eta}_2$ with high probability by treating $t_{21}$ and $t_{22}$ as noise~\cite{Yang:2012}. By removing $\hat{\eta}_2$ from the received signals, $u(2)$ and $v(2)$ can be recovered with high probability as long as their power constraints are satisfied.
  \subsubsection*{Slot-3} The transmission in the third time slot is similar to that in the slot-2, where the index corresponding to $\hat{\eta}_1$ chosen from $\Xc_2$ is encoded to $c_1$ and transmitted as a common message together with another new symbol $v(3)$ from Tx-2, while only one new symbol $u(3)$ intended to Rx-1 is sent from Tx-1. By omitting the encoding and decoding process of $c_1$, the equivalent transmitted signals are
      \begin{subequations}
      \begin{align}
        \xv_1(3) &= \hat{\hv}_{21}^{\perp}(3) u(3)\\
        \xv_2(3) &= \Bmatrix{P^{\alpha/2} \hat{\eta}_1 \\ 0} + \hat{\hv}_{12}^{\perp}(3) v(3)
      \end{align}
      \end{subequations}
      where the new symbols $u(3)$ and $v(3)$  satisfy the power constraint $\E (|u(3)|^2)=\E (|v(3)|^2) \le P^{\alpha}$. The received signals at both receivers are given as
      \begin{subequations}
        \begin{align}
        y(3) &= {h_{12,1}^{*}(3) {P^{\alpha/2} \hat{\eta}_1 }} + {\hv_{11}^\H(3) \hat{\hv}_{21}^{\perp}(3) u(3)} + {\hv_{12}^\H(3) \hat{\hv}_{12}^{\perp}(3) v(3)} + e(3) \\
        z(3) &= {h_{22,1}^{*}(3) {P^{\alpha/2} \hat{\eta}_1 }} + {\hv_{22}^\H(3) \hat{\hv}_{12}^{\perp}(3) v(3)} + {\hv_{21}^\H(3) \hat{\hv}_{21}^{\perp}(3) u(3)} + b(3).
        \end{align}
      \end{subequations}
      Similarly to the slot-2, $\hat{\eta}_1$ is retrievable at both receivers by treating other terms as noise, and $u(3)$ and $v(3)$ can be also recovered respectively by subtracting $\hat{\eta}_1$ from the received signals at both receivers.

At the end of the third slot, $u(2)$, $u(3)$, $\hat{\eta}_1$ and $\hat{\eta}_2$ can be successfully recovered at the Rx-1. As was modeled in~\cite{Kobayashi:2012,Yang:2012}, an equivalent MIMO can be formulated to find the symbols $\uv(1)$:
\begin{align}
  \Bmatrix{y(1)-\hat{\eta}_1 \\ \hat{\eta}_2} = \Bmatrix{\hv_{11}^\H(1) \\ \hv_{21}^\H(1)} \xv_1(1) + \Bmatrix{e(1)+\Delta_1 \\ -\Delta_2}
\end{align}
for the Rx-1, and it is similar for the Rx-2.

\begin{lemma}
  The vertex $\left(\frac{2+\alpha}{3},\frac{2+\alpha}{3}\right)$ of DoF region is achievable by the above scheme.
\end{lemma}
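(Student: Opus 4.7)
The plan is to verify that the three-slot protocol delivers $(2+\alpha)\log P$ bits of useful information to each receiver, which, after normalization by the three channel uses, gives the per-user DoF $\frac{2+\alpha}{3}$. As a first step I would check that every power constraint is respected and that the quantization distortion is absorbed by the Gaussian noise floor. In slot~$1$ the precoders $[\hat{\hv}_{21}(1)\ \hat{\hv}_{21}^{\perp}(1)]$ and $[\hat{\hv}_{12}(1)\ \hat{\hv}_{12}^{\perp}(1)]$ are unitary, so the power splits $\E|u_1(1)|^2,\E|v_1(1)|^2\sim P^{1-\alpha}$ and $\E|u_2(1)|^2,\E|v_2(1)|^2\sim P$ satisfy $\E\|\xv_i(1)\|^2\le P$. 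The overheard interference $\eta_i$ has variance of order $P^{1-\alpha}$, because the ZF leakage $|\hv_{ji}^\H(1)\hat{\hv}_{ji}^{\perp}(1)|^2\sim P^{-\alpha}$ multiplies the power-$P$ stream while the aligned stream carries only power $P^{1-\alpha}$. A source codebook with $(1-\alpha)\log P$ bits per source symbol (normalized distortion $D\sim P^{-(1-\alpha)}$) then yields quantization error $\E|\Delta_i|^2\sim P^{0}$, i.e.\ at the additive-noise floor.

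Next, I would show that in slots~$2$ and~$3$ both receivers decode the quantized-interference descriptions $\hat\eta_2,\hat\eta_1$ and their own new private symbols. At Rx-$1$ in slot~$2$, the three received terms have powers $\E|t_{11}|^2\sim P$, $\E|t_{12}|^2\sim P^{\alpha}$, and $\E|t_{13}|^2\sim P^{0}$; treating $t_{12}$ and $t_{13}$ as noise gives SINR $\sim P^{1-\alpha}$, which is precisely what is needed to carry the $(1-\alpha)\log P$-bit codeword $\hat\eta_2$. The same argument applies to $z(2)$ at Rx-$2$, so the common message is decodable at both receivers. After successive cancellation of $\hat\eta_2$, the residual channel for $u(2)$ at Rx-$1$ has SNR $\sim P^{\alpha}$ and supports $\alpha\log P$ bits; the analogous step delivers $v(2)$ to Rx-$2$. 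Slot~$3$ is handled by interchanging the roles of Tx-$1$ and Tx-$2$, transporting $\hat\eta_1$ to both receivers together with the private symbols $u(3)$ and $v(3)$.

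With $\hat\eta_1$ and $\hat\eta_2$ decoded at Rx-$1$, I would form the equivalent $2\times 2$ MIMO observation
\begin{equation*}
\begin{bmatrix} y(1)-\hat\eta_1 \\ \hat\eta_2 \end{bmatrix}
= \begin{bmatrix} \hv_{11}^\H(1) \\ \hv_{21}^\H(1) \end{bmatrix}\xv_1(1)
+ \begin{bmatrix} e(1)+\Delta_1 \\ -\Delta_2 \end{bmatrix},
\end{equation*}
in which the channel matrix has full rank almost surely by Assumption~$1$ and the effective noise has variance $O(1)$. Because the covariance of $\xv_1(1)$ has eigenvalues of order $P^{1-\alpha}$ and $P$ along the orthogonal directions $\hat{\hv}_{21}(1)$ and $\hat{\hv}_{21}^{\perp}(1)$, a standard $\log\det(\Id+H\Sigma H^{\H})$ evaluation yields a rate of $(2-\alpha)\log P + O(1)$ for $\uv(1)$; the symmetric construction at Rx-$2$ gives the same rate for $\vv(1)$. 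Summing over the three channel uses, Rx-$1$ accumulates $(2-\alpha)\log P+\alpha\log P+\alpha\log P=(2+\alpha)\log P$, so $d_1=\frac{2+\alpha}{3}$, and by symmetry $d_2=\frac{2+\alpha}{3}$.

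The main obstacle I expect is the second step: carefully orchestrating the three simultaneous power scales in slots~$2$ and~$3$---common message at power $P$, private symbol at power $P^{\alpha}$, cross ZF leakage at power $P^{0}$---so that the $(1-\alpha)\log P$ source rate of the quantized interference matches the $P^{1-\alpha}$ SINR of the common message under successive decoding, while simultaneously ensuring that the distortion $\Delta_i$ does not corrupt the slot-$1$ MIMO inversion in the third step. Once this rate--distortion--SINR alignment is in place, the remaining items reduce to standard MIMO-DoF bookkeeping.
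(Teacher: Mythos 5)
Your proposal is correct and follows essentially the same route as the paper's proof: decode the quantized interference and the power-$P^{\alpha}$ private symbols in slots 2 and 3 (each contributing $\alpha\log P$), then form the equivalent $2\times 2$ MIMO observation $[\,y(1)-\hat\eta_1;\ \hat\eta_2\,]$ with $O(1)$ effective noise and evaluate $\log\det$ to get $(2-\alpha)\log P$ for $\uv(1)$, summing to $(2+\alpha)\log P$ over three slots. The paper's appendix performs the same accounting via an explicit mutual-information chain-rule decomposition and a slightly different linear transformation of $y(1),y(3)$, but the substance is identical.
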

\begin{proof}
    We outline the main idea of the proof here and please refer to Appendix for details.

    At the Rx-1 for instance, we transform the original signal model into an equivalent $2 \times 2$ point-to-point MIMO system model for $\uv(1)$ (resp.~$\vv(1)$ at the Rx-2), together with two parallel SISO signal models respectively for $u(2)$ and $u(3)$ (resp.~$v(2)$ and $v(3)$ at the Rx-2). For the MIMO model, we obtain the DoF of $2-\alpha$, while get $\alpha$ DoF for each parallel SISO model, and finally $\frac{2-\alpha+2 \alpha}{3} = \frac{2+\alpha}{3}$ DoF is achieved per user.
\end{proof}

\subsubsection{\textbf{Achievability of $(1,\alpha)$}}
In the following, we extend the Han-Kobayashi scheme~\cite{Han:81} here to achieve the vertex $(1,\alpha)$.

The symbols sent from the Tx-1 consists of two parts $u_c$ and $u_p$, where only $u_p$ is precoded by using imperfect current CSIT. Simultaneously, one symbol $v_p$ intended to Rx-2 is sent from the Tx-2 with ZF precoding. The transmission can be given as
  \begin{subequations}
  \begin{align}
    \xv_1 &= \Bmatrix{u_c\\0} + \hat{\hv}_{21}^{\perp} u_p \\
    \xv_2 &= \hat{\hv}_{12}^{\perp} v_p
  \end{align}
  \end{subequations}
  where the transmitted symbols are assumed to satisfy the power constraints $\E(\Abs{u_c}) \le P$, $\E(\Abs{u_p}) = \E(\Abs{v_p}) \le P^{\alpha}$. Although the symbol $u_c$ is decodable by both receivers and hence referred to as a common message, it is only desirable by Rx-1. On the other hand, we refer to $u_p$, $v_p$ as the private messages which can only be seen and decoded by their corresponding receivers.

At the receiver side, we have
  \begin{subequations}
    \begin{align}
    y &= h_{11,1}^{*} u_c + \underbrace{\hv_{11}^\H \hat{\hv}_{21}^{\perp} u_p}_{\eta_{11}} + \underbrace{{\hv}_{12}^{\H} \hat{\hv}_{12}^{\perp} v_p}_{{\eta}_{12}} + e \\
    z &= h_{21,1}^{*} u_c + \underbrace{\hv_{22}^\H \hat{\hv}_{12}^{\perp} v_p}_{\eta_{22}} + \underbrace{{\hv}_{21}^{\H}\hat{\hv}_{21}^{\perp} u_p}_{{\eta}_{21}} + b,
    \end{align}
  \end{subequations}
  where the terms carrying common message are with approximated power $P$, while those carrying private messages are $\E(|{\eta}_{11}|^2)=\E(|{\eta}_{22}|^2) \sim P^{\alpha}$, and the interference terms $\E(|{\eta}_{12}|^2)=\E(|{\eta}_{21}|^2) \sim P^{0}$ are at noise level according to (\ref{eq:P-to-alpha}).

By firstly treating ${\eta}_{i1}, {\eta}_{i2}$ as noise, Rx-$i$ can recover the common message $u_c$ with high probability. Then, the private messages $u_p$ and $v_p$ can be retrieved from the received signals after $u_c$ being subtracted at the Rx-1 and Rx-2, respectively.

\begin{lemma}
  The vertices $(1,\alpha)$ and $(\alpha,1)$ of DoF region are achievable.
\end{lemma}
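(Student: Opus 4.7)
The plan is to verify that the Han--Kobayashi--style scheme just described realizes the DoF pair $(d_1,d_2)=(1,\alpha)$; the other vertex $(\alpha,1)$ will then follow immediately by swapping the roles of the two Tx--Rx pairs. First I would check admissibility of the power allocations: Tx-$1$ spends order-$P$ power on $u_c$ and order-$P^{\alpha}$ power on $u_p$, Tx-$2$ spends order-$P^{\alpha}$ power on $v_p$, and all of these respect $\E(\Norm{\xv_i})\le P$ up to a multiplicative constant that is absorbed in the $O(1)$ terms of the DoF computation. The three codebooks are taken i.i.d.\ Gaussian and independent, as in standard superposition coding.

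The core of the argument is a successive--decoding SINR analysis at each receiver. At Rx-$1$, the desired terms are $u_c$ (power $\sim P$) and $\eta_{11}$ (power $\sim P^{\alpha}$); the cross--interference $\eta_{12}=\hv_{12}^{\H}\hat{\hv}_{12}^{\perp} v_p$ sees the ZF precoder against the true channel $\hv_{12}=\hat{\hv}_{12}+\tilde{\hv}_{12}$, so by (\ref{eq:P-to-alpha}) the leakage coefficient has variance of order $P^{-\alpha}$ and hence $\E(\Abs{\eta_{12}})\sim P^{0}$. Decoding $u_c$ first while treating both $\eta_{11}$ and $\eta_{12}$ as noise yields an SINR of order $P/(P^{\alpha}+1)\sim P^{\,1-\alpha}$ and thus a rate $(1-\alpha)\log P+O(1)$. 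After subtracting the decoded $u_c$, decoding $u_p$ faces signal power $P^{\alpha}$ against residual noise-plus-interference of order $1$, producing $\alpha\log P+O(1)$; the two contributions sum to $R_1=\log P+O(1)$, i.e., $d_1=1$.

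At Rx-$2$ the common symbol $u_c$ is not intended but must still be decodable so that it can be stripped. Its signal power is $\sim P$ and the aggregate residual (the leakages $\eta_{22},\eta_{21}$ plus noise) is again of order $P^{\alpha}$, so the SINR for $u_c$ is of order $P^{\,1-\alpha}$, which grows without bound for $\alpha\in[0,1)$; the boundary $\alpha=1$ degenerates to the perfect--CSIT corner $(1,1)$ handled by pure ZF. After cancellation, $v_p$ is decoded with signal power $P^{\alpha}$ and residual of order $1$, giving $R_2=\alpha\log P+O(1)$ and hence $d_2=\alpha$. Combining these yields the vertex $(1,\alpha)$, and the symmetric role--swap delivers $(\alpha,1)$.

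The step I expect to warrant the most care is precisely the decodability of $u_c$ at the unintended receiver: one must confirm that the order-$P^{\alpha}$ residual never swamps the common symbol, which is guaranteed by $1-\alpha>0$ on the open interval $\alpha\in[0,1)$ and by consistency with the pure--ZF limit at $\alpha=1$. The remaining verifications amount to routine exponent bookkeeping, invoking (\ref{eq:P-to-alpha}) whenever a ZF leakage term is evaluated and appealing to standard Gaussian--codebook achievability at each successive--decoding stage.
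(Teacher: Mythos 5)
Your proposal is correct and follows essentially the same route as the paper: the same Han--Kobayashi superposition scheme with powers $P$ for $u_c$ and $P^{\alpha}$ for $u_p,v_p$, successive decoding of the common symbol at both receivers (treating the order-$P^{\alpha}$ and order-$P^{0}$ leakage terms as noise) followed by the private symbols, giving $(1-\alpha)\log P+\alpha\log P$ at Rx-1 and $\alpha\log P$ at Rx-2, with $(\alpha,1)$ by symmetry. The only cosmetic difference is that the paper packages your SINR bookkeeping as chain-rule mutual-information identities, e.g.\ $I(u_c,u_p;y|\Tc)=I(u_c;y|\Tc)+I(u_p;y'|\Tc)$ and $I(v_p;z|\Tc)=I(v_p;z'|\Tc)$, while you state the decodability of $u_c$ at the unintended receiver explicitly rather than implicitly.
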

\begin{proof}
Define
    \begin{subequations}
    \begin{align}
    y' &\defeq \eta_{11} + \eta_{12} + e \\
    z' &\defeq \eta_{22} + \eta_{21} + b.
    \end{align}
  \end{subequations}

    For both receivers, the achievable rate can be given by
    \begin{align}
      I(u_c,u_p;y|\Tc) &= I(u_c;y|\Tc)+I(u_p;y|\Tc,u_c)\\
      &=I(u_c;y|\Tc)+I(u_p;y'|\Tc)\\
      &=  \E \log \left( 1+\frac{\Abs{h_{11,1}^{*} u_c}}{\Abs{\eta_{11}}+\Abs{{\eta}_{12}}+\Abs{e}} \right) + \E \log \left( 1+\frac{\Abs{\eta_{11}}}{\Abs{{\eta}_{12}}+\Abs{e}} \right)\\
      &= (1-\alpha) \log P + \alpha \log P + O(1)\\
      &= \log P + O(1)
    \end{align}
    for the Rx-1, and
    \begin{align}
      I(v_p;z|\Tc) &=  I(v_p;z|\Tc,u_c) + I(v_p;u_c|\Tc) - I(v_p;u_c|\Tc,z)\\
      &= I(v_p;z|\Tc,u_c) = I(v_p;z'|\Tc) \label{eq:scheme-2-pr}\\
      &= \E \log \left( 1+\frac{\Abs{\eta_{22}}}{\Abs{{\eta}_{21}}+\Abs{b}} \right)\\
      &= \alpha \log P + O(1)
    \end{align}
    for the Rx-2, where (\ref{eq:scheme-2-pr}) holds because $u_c$ and $v_p$ are independent.

    The DoF for both receivers can be simply obtained by definition. The other vertex $(\alpha,1)$ can be achieved by swapping the roles of Tx-1 and Tx-2. This completes the proof.
\end{proof}

Note that the vertices $(1,0)$ and $(0,1)$ are achievable by letting one pair communicate while keeping the other one silent. In conclusion, all vertices of the DoF region for two-user MISO interference channel are achievable, and in turn the entire region can be achieved by time sharing.

\section{Extension to MIMO Case}
Here, we extend the aforementioned MISO case to a class of MIMO settings with antenna configuration $(M,M,N,N)$, where $M$ antennas at each transmitter and $N$ antennas at each receiver, satisfying $M \ge 2N$. This includes a generalized MISO setting with more than 2 antennas at each transmitter. The discrete time baseband signal model is given by
\begin{subequations}
\begin{align}
\yv(t) &= \Hm_{11}(t) \xv_1(t) + \Hm_{12}(t) \xv_2(t) + \ev(t) \\
\zv(t) &= \Hm_{21}(t) \xv_1(t) + \Hm_{22}(t) \xv_2(t) + \bv(t),
\end{align}
\end{subequations}
for any time instant $t$, where $\Hm_{ji}(t) \in \CC^{N\times M}$ is the channel matrix from Tx-$i$ to Rx-$j$; $\ev(t), \bv(t) \sim \CN[0,\Id_N]$ are normalized AWGN vectors at the respective receivers; the coded input signal $\xv_i(t) \in \CC^{M\times 1}$ is subject to the power constraint $\E( \norm{\xv_i(t)}^2 ) \le P$, $\forall\,t$.

In analogy to the MISO case, we have the optimal DoF region of two-user time correlated $(M,M,N,N)$ MIMO interference channel.
\begin{theorem}
  In the two-user $(M,M,N,N)$ MIMO interference channel ($M \ge 2N$) with perfect delayed CSIT and imperfect current CSIT, the optimal DoF region can be characterized by
  \begin{subequations}
  \begin{align}
    d_1 &\le N \\
    d_2 &\le N \\
    d_1+2d_2 &\le N(2+\alpha)\\
    2d_1+d_2 &\le N(2+\alpha).
  \end{align}
  \end{subequations}
\end{theorem}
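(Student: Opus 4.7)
I would mirror the MISO development of Sections~IV and~V, replacing scalar observations and channel vectors by their $N$-dimensional and $N\times M$ counterparts. The individual bounds $d_i\le N$ are immediate: even with fully cooperating transmitters, the resulting point-to-point MIMO channel has at most $\min(2M,N)=N$ DoF. The real work is on the two weighted-sum inequalities and the two vertices.

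\textbf{Converse.} I would follow Section~IV line by line: enhance Rx-2 with access to $\yv(t)$, define $\yv'(t)\eqdef\Hm_{12}(t)\xv_2(t)+\ev(t)$ and $\zv'(t)\eqdef\Hm_{22}(t)\xv_2(t)+\bv(t)$, and replay the Fano chain (\ref{eq:R-1-bound})--(\ref{eq:R2-bound}) verbatim. Since the cooperating MIMO bound gives $R_1\le N\log P + O(1)$, one obtains
\[
n(2R_1+R_2)\le 2nN\log P+\sum_{t=1}^{n}\bigl[h(\yv'(t),\zv'(t)\cond\Uc(t),\Hm(t))-2h(\yv'(t)\cond\Uc(t),\Hm(t))\bigr]+n\cdot O(1)+n\epsilon_n.
\]
The Gaussian-maximizer chain (\ref{eq:gaussian-input-1})--(\ref{eq:expectation-bound1}) then reduces the per-slot difference to
\[
\E\bigl[\log\det(\Id_{2N}+\Sm(t)\Km_u(t)\Sm^{\H}(t))-2\log\det(\Id_N+\Hm_{12}(t)\Km_u(t)\Hm_{12}^{\H}(t))\bigr],
\]
where $\Sm(t)\eqdef[\Hm_{12}^{\H}(t)\ \Hm_{22}^{\H}(t)]^{\H}$ is $2N\times M$. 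Applying Fischer's inequality to the positive semidefinite block matrix $\Id_{2N}+\Sm\Km_u\Sm^{\H}$ yields $\log\det(\Id_{2N}+\Sm\Km_u\Sm^{\H})\le\log\det(\Id+\Hm_{12}\Km_u\Hm_{12}^{\H})+\log\det(\Id+\Hm_{22}\Km_u\Hm_{22}^{\H})$, so the target collapses to $\E[\log\det(\Id+\Hm_{22}\Km_u\Hm_{22}^{\H})-\log\det(\Id+\Hm_{12}\Km_u\Hm_{12}^{\H})]$, which I would bound by an $N$-dimensional extension of Lemma~1 applied eigendirection-by-eigendirection to $\Km_u$, yielding $-N\log\sigma^2+O(1)=N\alpha\log P + O(1)$. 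Symmetry gives the companion bound $d_1+2d_2\le N(2+\alpha)$.

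\textbf{Achievability.} Both schemes of Section~V extend by sending $N$ streams wherever the MISO scheme sent a single symbol. For $(N,N\alpha)$, the HK-type scheme uses $\xv_1=[\uv_c^{\T}\ \zerov^{\T}]^{\T}+\hat{\Hm}_{21}^{\perp}\uv_p$ and $\xv_2=\hat{\Hm}_{12}^{\perp}\vv_p$, with $\uv_c\in\CC^N$ at power $\sim P$ and $\uv_p,\vv_p\in\CC^N$ at power $\sim P^{\alpha}$; the null-space precoders $\hat{\Hm}_{21}^{\perp},\hat{\Hm}_{12}^{\perp}\in\CC^{M\times N}$ exist thanks to $M-N\ge N$. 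Rx-1 decodes $\uv_c$ treating everything else as noise (yielding $N(1-\alpha)\log P$), then subtracts and recovers $\uv_p$ by MIMO decoding (yielding $N\alpha\log P$); Rx-2 recovers $\vv_p$ similarly. For the symmetric vertex $\bigl(N\tfrac{2+\alpha}{3},N\tfrac{2+\alpha}{3}\bigr)$, the 3-slot scheme sends in slot~1 $2N$ precoded streams per Tx: $N$ streams lying in the range of $\hat{\Hm}_{\bar{\imath}i}^{\H}$ at power $P^{1-\alpha}$ per stream and $N$ streams in the null space of $\hat{\Hm}_{\bar{\imath}i}$ at power $P$ per stream, so that the residual interference at the unintended receiver is $N$-dimensional with per-entry variance $\sim P^{1-\alpha}$; in slots~2 and~3 the $N(1-\alpha)\log P$-bit quantized residuals are broadcast as common messages superposed with new ZF private symbols at power $P^{\alpha}$. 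The equivalent $2N\times M$ MIMO model for $\uv(1)$ at Rx-1 contributes $N(2-\alpha)\log P$ and the new private symbols contribute $2N\alpha\log P$ across the 3 slots, totaling $N(2+\alpha)/3$ DoF per user. The remaining boundary points follow by time sharing with $(N,0)$ and $(0,N)$.

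\textbf{Main obstacle.} The principal difficulty is the MIMO extension of Lemma~1 combined with the Fischer step: all $N$ eigenvalues of $\Km_u$ must be tracked carefully so that the upper and lower bounds give the tight pre-log $N\alpha\log P$, rather than a loose bound involving only the top eigenvalue. A secondary subtlety is verifying that the hypothesis $M\ge 2N$ is precisely what is needed so that the ZF null-space has dimension $\ge N$, the residual interference in slot~1 is exactly $N$-dimensional, and the quantization budget in the 3-slot scheme matches the common-message capacity of slots~2 and~3.
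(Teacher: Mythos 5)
Your plan matches the paper's own proof of this theorem essentially step for step: the converse replays the Section~IV Fano and Gaussian-maximizer chain with vector observations, splits the block determinant exactly as you propose (a Fischer-type inequality), and then invokes a matrix version of Lemma~1, while the achievability extends both MISO schemes using range/null-space precoders whose existence is guaranteed by $M \ge 2N$, with the same power levels and $(1-\alpha)\log P$-bit per-entry quantization of the $N$-dimensional residual interference. The ``main obstacle'' you flag is precisely the paper's Lemma~4, which it proves in the appendix by tracking the top $N$ eigenvalues of $\Km_u$ (Poincar\'e separation for the upper bound; Minkowski/Jensen and a noncentral-Wishart determinant estimate for the lower bound), giving a per-eigenvalue gap of $-\log\sigma^2 + O(1)$ and hence the tight $N\alpha\log P$ term you need.
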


\textbf{Remark}: The DoF region is irrelevant to the number of transmit antennas as long as $M \ge 2N$. For the $M \times 1$ MISO case ($M \ge 2$), the DoF region is identical to that when $M=2$, coinciding with the region for two-user MISO broadcast channel.

Following the same strategy as the MISO case, we first provide the outer bound and then show the region confined by the outer bound is achievable.

\subsection{Outer Bound}
The outer bound can be simply extended from the MISO case. To avoid redundancy, we outline the main difference but omit the similar parts. By defining similarly $\yv'(t)$, $\zv'(t)$, $\Uc(t)$, i.e.,
\begin{align}
  \yv'(t) &\defeq \Hm_{12}(t) \xv_2(t) + \ev(t)\\
  \zv'(t) &\defeq \Hm_{22}(t) \xv_2(t) + \bv(t)\\
  \Uc(t) &\defeq \left\{\{ \yv'(i) \}_{i=1}^{t-1},\{ \zv'(i) \}_{i=1}^{t-1},\{\Hm(i)\}_{i=1}^{t-1} ,\{\hat{\Hm}(i)\}_{i=1}^{t} \right\},
\end{align}
we have
\begin{align}
  nR_1 &\le n N \log P - \sum_{t=1}^n h( \yv'(t)|\Uc(t),\Hm(t)) + n \cdot O(1) + n \epsilon_n \\
  nR_2 &\le \sum_{t=1}^n h(\yv'(t),\zv'(t)|\Uc(t),\Hm(t))  + n \epsilon_n.
\end{align}
Define
\begin{align}
  \Sm(t) \defeq \Bmatrix{\Hm_{12}(t) \\ \Hm_{22}(t)} \quad \quad \hat{\Sm}(t) \defeq \Bmatrix{\hat{\Hm}_{12}(t) \\ \hat{\Hm}_{22}(t)}
\end{align}
and we upper-bound the weighted difference of two conditional differential entropies by
{\small
\begin{align}
  &h(\yv'(t),\zv'(t)|\Uc(t),\Hm(t)) - 2h( \yv'(t)|\Uc(t),\Hm(t))\\
  & \le \E_{\hat{\Sm}(t)}  \max_{\substack{\Cm \succeq 0,\\ \trace(\Cm) \le P}}\max_{\substack{p(\hat{\Sm}(t))\\ \Km_u(t) \preceq \Cm}} \E_{\Sm(t)|\hat{\Sm}(t)} (\log \det (\Id + \Sm(t) \Km_u(t) \Sm^{\H}(t) ) - 2  \log \det(\Id+\Hm_{12}(t) \Km_u(t) \Hm_{12}^\H(t))) \\
  & \le \E_{\hat{\Sm}(t)}  \max_{\substack{\Cm \succeq 0,\\ \trace(\Cm) \le P}}\max_{\substack{p(\hat{\Sm}(t))\\ \Km_u(t) \preceq \Cm}} \E_{\Sm(t)|\hat{\Sm}(t)} (\log \det (\Id + \Hm_{22}(t) \Km_u(t) \Hm_{22}^{\H}(t) ) - \log \det (\Id+\Hm_{12}(t) \Km_u(t) \Hm_{12}^\H(t))) \\
  & \le N \alpha \log P + O(1)
\end{align}
}
where $\Km_u(t)$ possesses the same definition as that in the MISO case and the last inequality is obtained according to the following Lemma.
\begin{lemma}
  For an $N \times M$ random matrix ($M\ge N$) $\Hm = \hat{\Hm}+\tilde{\Hm}$ where $\tilde{\Hm}$ is independent of $\hat{\Hm}$ and whose entries satisfy $\tilde{h}_{ij} \sim \CN(0,\sigma^2)$, given any $\Km \succeq 0$ with eigenvalues $\lambda_1 \ge \cdots \ge \lambda_{M}$, we have the following upper and lower bounds:
  \begin{align}
     \E_{\tilde{\Hm}} \log \det (\Id+\Hm \Km \Hm^\H) &\le \sum_{i=1}^{N}\log (1+\norm{\hat{\Hm}}^2 \lambda_i) + O(1)\\
     \E_{\tilde{\Hm}} \log \det (\Id+\Hm \Km \Hm^\H) &\ge \sum_{i=1}^{N} \log (1+ \lambda_i \sigma^2  e^{\zeta}) + O(1).
  \end{align}
  The difference of the upper and lower bounds can be further bounded by
  \begin{align}
    &\log (1+\norm{\hat{\Hm}}^2 \lambda_i) - \log (1+ \lambda_i \sigma^2  e^{\zeta}) \le - \log (\sigma^2) + O(1)
  \end{align}
  where $\zeta \defeq \frac{1}{N}\sum_{i=1}^N \psi(N-i+1)$ and $\psi(x)$ is the digamma function that given by~\cite{Goodman:1963,Oyman:03}
  \begin{align}
    \psi(x) = - \gamma + \sum_{p=1}^{x-1} \frac{1}{p} \le \ln x
  \end{align}
  for integer $x$, where $\gamma$ is Euler's constant.
\end{lemma}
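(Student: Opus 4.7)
The plan is to extend the argument of Lemma~1 from the MISO to the MIMO setting, exploiting the structural observation that $\Hm\Km\Hm^\H$ has rank at most $N$, so effectively only the top $N$ eigenvalues of $\Km$ matter and, after a unitary change of basis, an $N\times N$ Wishart random matrix emerges. Both the upper and lower bounds, and their difference, will then follow the three-step template of the MISO proof, the principal new ingredient being the classical complex Wishart determinant identity $\E\log\det(\tilde{\Gm}\tilde{\Gm}^\H)=N\log\sigma^2+\sum_{k=1}^{N}\psi(k)=N\log\sigma^2+N\zeta$ from \cite{Goodman:1963,Oyman:03}, which accounts for the appearance of the digamma-based constant $\zeta$.

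For the upper bound I will first diagonalize $\log\det(\Id+\Hm\Km\Hm^\H)=\sum_{i=1}^{N}\log(1+\lambda_i(\Hm\Km\Hm^\H))$. Starting from $\Hm^\H\Hm\preceq\norm{\Hm}^2\Id$, conjugation by $\Km^{1/2}$ yields $\Km^{1/2}\Hm^\H\Hm\Km^{1/2}\preceq\norm{\Hm}^2\Km$, hence $\lambda_i(\Hm\Km\Hm^\H)\le\norm{\Hm}^2\lambda_i$ by Weyl's monotonicity. Jensen's inequality then pushes the $\tilde{\Hm}$-expectation inside each scalar log, giving $\sum_i\log(1+(2\norm{\hat\Hm}^2+2MN\sigma^2)\lambda_i)$ via $\E_{\tilde\Hm}\norm{\Hm}^2\le 2\norm{\hat\Hm}^2+2MN\sigma^2$, and the additive $MN\sigma^2\lambda_i$ contribution is absorbed in the $O(1)$ slack since $\sigma^2\to 0$ while $\norm{\hat\Hm}^2\lambda_i$ remains dominant.

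For the lower bound, let $\Qm_N$ denote the $M\times N$ matrix collecting the top $N$ eigenvectors of $\Km$ with associated diagonal $\Lambdam_N=\diag(\lambda_1,\ldots,\lambda_N)$. Since $\Km\succeq\Qm_N\Lambdam_N\Qm_N^\H$, we have $\log\det(\Id+\Hm\Km\Hm^\H)\ge\log\det(\Id+\Gm\Lambdam_N\Gm^\H)$, where $\Gm:=\Hm\Qm_N=\hat\Gm+\tilde\Gm$, $\hat\Gm:=\hat\Hm\Qm_N$, and $\tilde\Gm:=\tilde\Hm\Qm_N$ is an $N\times N$ matrix with i.i.d.\ $\CN(0,\sigma^2)$ entries by the unitary invariance of the Gaussian distribution. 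I will then argue that $\E_{\tilde\Gm}\log\det(\Id+\Gm\Lambdam_N\Gm^\H)$ is minimized at $\hat\Gm=\mathbf{0}$, generalizing the scalar monotonicity used in Lemma~1 (namely, $\E\log|a+z|^2$ is increasing in $|a|$, as seen via $\log x=\int_0^\infty(e^{-t}-e^{-tx})/t\,dt$). The zero-mean case reduces via the Wishart identity to $\E\log\det(\tilde\Gm\Lambdam_N\tilde\Gm^\H)=\sum_i\log(\lambda_i\sigma^2 e^\zeta)$, and applying convexity of $x\mapsto\log(1+e^x)$ together with Jensen's inequality at the level of individual eigenvalues repackages this as $\sum_{i=1}^{N}\log(1+\lambda_i\sigma^2 e^\zeta)+O(1)$.

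The difference bound then follows by subtracting the two bounds term by term: $\log(1+\norm{\hat\Hm}^2\lambda_i)-\log(1+\lambda_i\sigma^2 e^\zeta)\le\log(\norm{\hat\Hm}^2/(e^\zeta\sigma^2))+O(1)=-\log\sigma^2+O(1)$, absorbing the bounded $\norm{\hat\Hm}^2$ and $e^{-\zeta}$ into $O(1)$. The main obstacle I anticipate is the monotonicity step in the lower bound: lifting the scalar fact ``$\E\log|a+z|^2$ is increasing in $|a|$'' to the matrix statement ``$\E\log\det(\Id+(\hat\Gm+\tilde\Gm)\Lambdam_N(\hat\Gm+\tilde\Gm)^\H)$ is minimized at $\hat\Gm=\mathbf{0}$'', because the cross terms $\hat\Gm\Lambdam_N\tilde\Gm^\H+\tilde\Gm\Lambdam_N\hat\Gm^\H$ are indefinite and no direct $\succeq$-majorization applies. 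I would attack this either through a moment-generating-function computation using the noncentral Wishart density, or by interlacing the eigenvalues of $\Gm^\H\Gm$ with those of $\tilde\Gm^\H\tilde\Gm$ along the rank-one additions of $\hat\Gm$.
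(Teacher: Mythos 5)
Your overall architecture mirrors the paper's: reduce to the top $N$ eigenvalues of $\Km$ (the paper does this with the first $N$ columns $\Vm'$ of the eigenbasis, you with $\Qm_N$), bound the upper side by an operator-norm/Jensen argument with the $MN\sigma^2$ excess absorbed into $O(1)$, handle the zero-mean case with Goodman's complex-Wishart identity $\E\ln\det(\tilde{\Hm}'\tilde{\Hm}'^\H)=N\ln\sigma^2+\sum_{i=1}^N\psi(N-i+1)$, repackage via Minkowski/convexity of $\log(1+e^x)$, and subtract the bounds for the final difference estimate. Those parts are sound and essentially identical to the paper's proof.

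However, there is a genuine gap exactly where you flag the "main obstacle": the claim that $\E_{\tilde{\Gm}}\log\det\bigl(\Id+(\hat{\Gm}+\tilde{\Gm})\Lambdam_N(\hat{\Gm}+\tilde{\Gm})^\H\bigr)$ is minimized at $\hat{\Gm}=\mathbf{0}$ is asserted but not proved, and neither of your two proposed attacks closes it. Pointwise eigenvalue interlacing cannot work: as you yourself note, the perturbation $\hat{\Gm}\Lambdam_N\tilde{\Gm}^\H+\tilde{\Gm}\Lambdam_N\hat{\Gm}^\H+\hat{\Gm}\Lambdam_N\hat{\Gm}^\H$ is indefinite, so for fixed realizations some eigenvalues of $\Gm\Lambdam_N\Gm^\H$ genuinely drop below those of $\tilde{\Gm}\Lambdam_N\tilde{\Gm}^\H$, and interlacing only yields two-sided bounds that do not survive the expectation in the needed direction. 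The moment-generating-function route through the noncentral Wishart density would amount to re-deriving a nontrivial known result; this is precisely the step the paper does not prove from scratch but closes by citation, invoking the monotonicity of the coherent Rician MIMO channel (noncentral versus central Wishart log-determinant) from Kim--Lapidoth and Hosli--Kim--Lapidoth, treating $\hat{\Hm}\Vm'$ as a deterministic line-of-sight component so that $\E\ln\det(\Phim'\Phim'^\H)\ge\E\ln\det(\tilde{\Hm}'\tilde{\Hm}'^\H)$. (Note also that the paper applies this comparison after stripping the identity via Minkowski and Jensen, i.e.\ to $\E\ln\det(\cdot)$ rather than to $\E\log\det(\Id+\cdot)$ as in your sketch, which is the form in which the Wishart identity is directly usable.) Until you either import that reference or supply an actual proof of the noncentral-dominates-central inequality, the lower bound — and hence the lemma — is not established.
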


\begin{proof}
  Please refer to Appendix for details.
\end{proof}

Hence, we can outer-bound the weighted sum rate as
\begin{align}
  n (R_1 + 2R_2) &\le nN(2+\alpha) \log P + n \cdot O(1) + n \epsilon_n
\end{align}
and similarly obtain another outer bound by exchanging the roles of Rx-1 and Rx-2, i.e.,
\begin{align}
  n (2R_1 + R_2) &\le nN(2+\alpha) \log P + n \cdot O(1) + n \epsilon_n
\end{align}
and therefore the outer bound of the DoF is obtained by the definition.

\subsection{Achievability}
For the achievability, the vertices $(N,N\alpha)$, $(N\alpha,N)$, and $\left(\frac{N(2 + \alpha)}{3},\frac{N (2 + \alpha)}{3}\right)$ are all achievable and the achievable schemes can be simply extended from the MISO case. Here, we only show the achievable scheme for the vertex $\left(\frac{N(2 + \alpha)}{3},\frac{N (2 + \alpha)}{3}\right)$ for instance. For the achievability of $(N,N\alpha)$ and $(N\alpha,N)$, the extension is similar and straightforward. In the extended scheme, three time slots are consumed, where the transmitted signals are detailed as follows:
\subsubsection*{Slot-1}
The transmitted signals from both transmitters are given by
  \begin{subequations}
  \begin{align}
    \xv_1(1) &= \Bmatrix{\Qm_{21}(1) & \Qm_{21}^{\bot}(1) } \uv(1)\\
    \xv_2(1) &= \Bmatrix{\Qm_{12}(1) & \Qm_{12}^{\bot}(1) } \vv(1)
  \end{align}
  \end{subequations}
  where $\uv(1) \in \CC^{2N \times 1}$,  $\vv(1) \in \CC^{2N \times 1}$ are assumed to satisfy $\E(\Norm{\uv_1(1)})=\E(\Norm{\vv_1(1)})\le P$ and $\Qm_{ji}(t) \in \CC^{M \times N}$, $\Qm_{ji}^{\bot}(t) \in \CC^{M \times N}$ which satisfy
  \begin{subequations}
  \begin{align}
    \Qm_{21}(t) \subseteq \Rc\{\hat{\Hm}_{21}(t)\} \quad & \quad \Qm_{12}(t) \subseteq \Rc\{\hat{\Hm}_{12}(t)\}\\
    \Qm_{21}^{\bot}(t) \subseteq \Nc \{\hat{\Hm}_{21}(t)\} \quad  & \quad \Qm_{12}^{\bot}(t) \subseteq \Nc \{\hat{\Hm}_{12}(t)\}
  \end{align}
  \end{subequations}
  where $\Rc\{\cdot\}$ and $\Nc\{\cdot\}$ represent range and null spaces, respectively. Note that the range space is with dimension $N$ whereas the null space is with dimension $M-N$. Similarly to the MISO case, the estimation error satisfies $\E[\Norm{\Hm_{ji}(t)\Qm_{ji}^{\bot}(t)}] \sim P^{-\alpha}$.

  At both receivers, we have
  \begin{subequations}
    \begin{align}
    \yv(1) &= \Hm_{11}(1) \xv_1(1) + {\etav}_{1} + \ev(1) \\
    \zv(1) &= \Hm_{22}(1) \xv_2(1) + {\etav}_{2} + \bv(1),
    \end{align}
  \end{subequations}
  where the interference vectors overheard at both receivers are
  \begin{subequations}
    \begin{align}
    {\etav}_1 &= \Hm_{12}(1) \xv_2(1) \in \CC^{N \times 1}  \\
    &=\Hm_{12}(1) \Qm_{12}(1) \vv_1(1) + \Hm_{12}(1) \Qm_{12}^{\bot}(1) \vv_2(1)\\
    {\etav}_2 &= \Hm_{21}(1) \xv_1(1) \in \CC^{N \times 1}\\
    &=\Hm_{21}(1) \Qm_{21}(1) \uv_1(1) + \Hm_{21}(1) \Qm_{21}^{\bot}(1) \uv_2(1)
    \end{align}
  \end{subequations}
  where $\uv_1(1)$, $\uv_2(1)$, $\vv_1(1)$ and $\vv_2(1)$ are all $N \times 1$ vectors. By balancing the allocated power among those vectors, i.e.,
  \begin{align}
    \E(\norm{\uv_1(1)}^2) = P^{1-\alpha} \quad & \quad \E(\norm{\uv_2(1)}^2) = P-P^{1-\alpha}\\
    \E(\norm{\vv_1(1)}^2) = P^{1-\alpha} \quad & \quad \E(\norm{\vv_2(1)}^2) = P-P^{1-\alpha}
  \end{align}
  we approximate the total power of interference vectors as $\E(\norm{{\etav}_1}^2) \sim P^{1-\alpha}$ and $\E(\norm{{\etav}_2}^2) \sim P^{1-\alpha}$. A set of source codebooks $\{\Xc_{1i},\Xc_{2i},i=1,\cdots,N\}$ with size $(1-\alpha) \log P$ bits each are generated to represent the quantized elements of the interference vectors ${\etav}_2$ and ${\etav}_1$ at the Tx-1 and Tx-2, respectively\footnote{Here, the quantization is made on each element of the vector $\etav_i$ regardless of their mutual correlation.}. The codewords representing the elements of ${{\etav}}_2$ and ${{\etav}}_1$ are chosen uniformly from $\{\Xc_{1i}\}$ and $\{\Xc_{2i}\}$ and concatenated as $\hat{\etav}_{2}$ and $\hat{\etav}_{1}$, respectively. As stated in MISO case, the indices of $\hat{\etav}_{i}$ are encoded to $\cv_i$ using a Gaussian channel codebook and then forwarded as common messages to both receivers in the following two slots. For the sake of simplicity, we omit the channel encoding and decoding process of $\cv_i$, and therefore, it looks as if $\hat{\etav}_{i}$ itself is conveyed.

\subsubsection*{Slot-2} The objective of the slot-2 is to convey the codeword vector $\hat{\etav}_{2}$ whose information is carried on a coded common message $\cv_2$ together with a new symbol vector at the Tx-1, while only a new symbol vector is sent at the Tx-2. By omitting the encoding and decoding process of $\cv_2$, the equivalent transmitted signals are
    \begin{subequations}
      \begin{align}
        \xv_1(2) &= P^{\alpha/2} \Qm_{21}(2) \hat{\etav}_{2}  + {\Qm}_{21}^{\perp}(2) \uv(2)\\
        \xv_2(2) &= {\Qm}_{12}^{\perp}(2) \vv(2)
      \end{align}
      \end{subequations}
      where $\uv(2) \in \CC^{N \times 1}$ and $\vv(2) \in \CC^{N \times 1}$. We assume $\E(\norm{\uv(2)}^2) \le P^{\alpha}$ and $\E(\norm{\vv(2)}^2) \le P^{\alpha}$ to ensure they are recoverable. By treating $\uv(2)$ and $\vv(2)$ as noise, $N \times 1$ vector $\hat{\etav}_{2}$ is retrievable with high probability provided $N$ linearly independent equations at both receivers. After that, $\uv(2)$ and $\vv(2)$ are also recoverable from $N$ linear equations at the Rx-1 and Rx-2 by subtracting $\hat{\etav}_{2}$ from the received signals.
\subsubsection*{Slot-3} The objective of the slot-3 is the same as the slot-2 but with the exchanged roles between Tx-1 and Tx-2. The equivalent transmitted signals are given as
    \begin{subequations}
      \begin{align}
        \xv_1(3) &= {\Qm}_{21}^{\perp}(3) \uv(3)\\
        \xv_2(3) &= P^{\alpha/2} \Qm_{12}(3) \hat{\etav}_{1} + {\Qm}_{12}^{\perp}(3) \vv(3)
      \end{align}
      \end{subequations}
      where $\uv(3) \in \CC^{N \times 1}$, $\vv(3) \in \CC^{N \times 1}$ are assume to satisfy power constraint $\E(\norm{\uv(3)}^2) \le P^{\alpha}$ and $\E(\norm{\vv(3)}^2) \le P^{\alpha}$. By firstly treating $\uv(3)$ and $\vv(3)$ as noise, $N \times 1$ vector $\hat{\etav}_{1}$ can be recovered with high probability given $N$ linearly independent equations at both receivers. Similarly to the slot-2, $\uv(3)$ and $\vv(3)$ can also be recovered from the subtracted received signals.

At the end of slot-3, $N \times 1$ vectors $\hat{\etav}_{1}$ and $\hat{\etav}_{2}$ can be all recovered at both receivers, serving to cancel the overheard interference as well as to provide additional linearly independent equations for $\vv(1)$ and $\uv(1)$, respectively. With $2N$ linearly independent equations, the $2N \times 1$ vectors $\uv(1)$ and $\vv(1)$ are both recoverable with high probability at its respective receiver.

The proof of the achievable DoF pair $\left(\frac{N(2 + \alpha)}{3},\frac{N (2 + \alpha)}{3}\right)$ is similar to that in the MISO case. Take Tx-1/Rx-1 pair for example. The original channel model can be transformed to an equivalent $2N \times 2N$ point-to-point MIMO channel which conveys symbol vector $\uv(1)$ yielding $N(2-\alpha)$ DoF, and two parallel $N \times N$ MIMO channels which carry $\uv(2)$ and $\uv(3)$ respectively yielding $N\alpha$ DoF each. Hence, the total $N(2+\alpha)$ DoF is achieved within three time slots, and in turn the DoF pair $\left(\frac{N(2 + \alpha)}{3},\frac{N (2 + \alpha)}{3}\right)$ is achievable by symmetry. The detailed proof can be analogically derived according to the MISO case and hence omitted here.

\section{Conclusion}
We characterize the DoF region of the two-user MISO and certain MIMO interference channels where the transmitter has access to both delayed CSI as well as an estimate of the current CSI. In particular, these results are suited to time-correlated fading channels for which a latency-prone feedback channel provided the transmitter with the delayed samples, based on which a prediction mechanism can be applied to obtain the current imperfect CSI. Our DoF region covers a family of CSIT settings, coinciding with previously reported results for extreme situations such as pure delayed CSIT and pure current CSIT. For intermediate regimes, the DoF achieving scheme relies on the forwarding to users of a suitably quantized version of prior interference obtained under imperfect linear ZF precoding at the two transmitters.

\appendix
\subsection{Proof of Lemma~2}
  We consider Rx-1 for example to illustrate $d_1=\frac{2+\alpha}{3}$ is achievable. By symmetry, $d_2$ at the Rx-2 can be similarly obtained. Given the received signal model
  \begin{align}
    y(1) &= \hv_{11}^\H(1) \xv_1(1) + {\eta}_1 + e(1) \\
    y(2) &= \underbrace{h_{11,1}^{*}(2) {P^{\alpha/2} \hat{\eta}_2 }}_{t_{11}} + \underbrace{\hv_{11}^\H(2) \hat{\hv}_{21}^{\perp}(2) u(2)}_{t_{12}} + \underbrace{\hv_{12}^\H(2) \hat{\hv}_{12}^{\perp}(2) v(2)}_{t_{13}} + e(2) \\
    y(3) &= \underbrace{h_{12,1}^{*}(3) {P^{\alpha/2} \hat{\eta}_1 }}_{s_{11}} + \underbrace{\hv_{11}^\H(3) \hat{\hv}_{21}^{\perp}(3) u(3)}_{s_{12}} + \underbrace{\hv_{12}^\H(3) \hat{\hv}_{12}^{\perp}(3) v(3)}_{s_{13}} + e(3)
  \end{align}
  we have the achievable rate of Tx-1 and Rx-1 pair, i.e.,
  \begin{align}
    &I(\uv(1),u(2),u(3);\{y(t)\}_{t=1}^3|\Tc) \\
    &=I(\uv(1);\{y(t)\}_{t=1}^3|\Tc) + I(u(2),u(3);\{y(t)\}_{t=1}^3|\uv(1),\Tc). \label{eq:mutual-information}
  \end{align}
  For the first term, by defining
  \begin{align}
  y'(1) &= h_{12,1}^{*}(3) P^{\alpha/2} y(1) - y(3) \\
  &= h_{12,1}^{*}(3) P^{\alpha/2} \hv_{11}^\H(1) \xv_1(1) + h_{12,1}^{*}(3) P^{\alpha/2}(\Delta_1+e(1)) - s_{12} - s_{13} - e(3)\\
  y'(2) &= h_{11,1}^{*}(2) P^{\alpha/2} {\hv}_{21}^{\H}(1) \xv_1(1)) - h_{11,1}^{*}(2) P^{\alpha/2} \Delta_2 + t_{12} + t_{13} + e(2).
\end{align}
  we have
  \begin{align}
    I(\uv(1);\{y(t)\}_{t=1}^3|\Tc) &= I(\uv(1);y'(1),y'(2),y(3)|\Tc)\\
    &= I(\uv(1);y'(1),y'(2)|\Tc)
  \end{align}
where $y(3)$ is independent of $\uv(1)$.

By formulating an equivalent MIMO channel with Gaussian input $\uv(1)$ and output $\{y'(1),y'(2)\}$, i.e.,
\begin{align}
  \Bmatrix{y'(1)\\y'(2)} &= P^{\alpha/2}\underbrace{\Bmatrix{h_{12,1}^{*}(3)  \hv_{11}^\H(1) \\ h_{11,1}^{*}(2) {\hv}_{21}^{\T}(1)} [\hat{\hv}_{21}(1) \ \hat{\hv}_{21}^{\perp}(1)]}_{\Hm}  \uv(1) \\
  &+ \underbrace{\Bmatrix{h_{12,1}^{*}(3) P^{\alpha/2}(\Delta_1+e(1)) - s_{12} - s_{13} - e(3)\\- h_{11,1}^{*}(2) P^{\alpha/2} \Delta_2 + t_{12} + t_{13} + e(2)}}_{\nv}
\end{align}
where $\E(|t_{12}|^2)=\E(|s_{12}|^2) \sim P^{\alpha}$, $\E(|t_{13}|^2)=\E(|s_{13}|^2) \sim P^{0}$, $\E(|\Delta_i|^2) \sim P^{0}$, and $\E(|e(t)|^2) \sim P^0$, we have
\begin{align}
  I(\uv(1);y'(1),y'(2)|\Tc) &=I(P^{\alpha/2}\Hm \uv(1);y'(1),y'(2)|\Tc)\\
  &=h(P^{\alpha/2}\Hm \uv(1)|\Tc) - h(P^{\alpha/2}\Hm \uv(1)|y'(1),y'(2),\Tc) \\
  &=h(P^{\alpha/2}\Hm \uv(1)|\Tc) - h(\nv|y'(1),y'(2),\Tc) \label{eq:translation-app}\\
  &\ge h(P^{\alpha/2}\Hm \uv(1)|\Tc) - h(\nv) \label{eq:conditioning-app} \\
  &= \E \log \det (P^{\alpha} \Wm^{-1} \Hm \Km_1 \Hm^\H) \label{eq:term1-1}\\
  &= (2-\alpha) \log P + O(1) \label{eq:term1-2}
\end{align}
where (\ref{eq:translation-app}) holds because translation does not change differential entropy; (\ref{eq:conditioning-app}) follows the fact that conditioning reduces entropy; (\ref{eq:term1-1}) is according to the definition of differential entropy with Gaussian distributed input, where $\Km_1 \defeq \E \{\uv(1)\uv^\H(1)\}$ and $\Wm \defeq \E \{\nv \nv^\H\}$; and (\ref{eq:term1-2}) is obtained because $\Km_1 \sim \diag{\{P^{1-\alpha}, P\}}$, $\Wm \sim P^{\alpha} \Id$ and $\Hm$ does not scale as $P$.

  For the second term, we have
  \begin{align}
    &I(u(2),u(3);\{y(t)\}_{t=1}^3|\uv(1),\Tc)\\
    &=I(u(2),u(3);y(2),y(3)|\uv(1),\Tc) \label{eq:term2-mi-1}\\
    &=I(u(2),u(3);y(2),y(3)|\uv(1),\hat{\eta}_1,\hat{\eta}_2,\Tc) +  I(u(2),u(3);\hat{\eta}_1,\hat{\eta}_2|\uv(1),\Tc) \nonumber\\
    &~~~~~~~~~~~~~~~~~~~~~~~~~~~~~~~~~~~~~~~~~- I(u(2),u(3);\hat{\eta}_1,\hat{\eta}_2|\uv(1),y(2),y(3),\Tc) \label{eq:term2-mi-2}\\
    &= I(u(2),u(3);y(2),y(3)|\uv(1),\hat{\eta}_1,\hat{\eta}_2,\Tc) \label{eq:term2-mi-3}\\
    &= I(u(2),u(3); y''(2),y''(3)|\Tc) \label{eq:term2-mi-4}
  \end{align}
  where (\ref{eq:term2-mi-1}) holds because $y(1)$ is independent of $\{u(2), u(3)\}$; (\ref{eq:term2-mi-2}) is obtained by the chain rule of mutual information; (\ref{eq:term2-mi-3}) is because $u(2),u(3)$ are independent of $\hat{\eta}_1,\hat{\eta}_2$; (\ref{eq:term2-mi-4}) is due to
  \begin{align}
    y''(2) &= y(2)-h_{11,1}^{*}(2) {P^{\alpha/2} \hat{\eta}_2 } = {t_{12}} + { {t_{13}} + e(2)}\\
    y''(3) &= y(3)-h_{12,1}^{*}(3) P^{\alpha/2} \hat{\eta}_1 =  s_{12} + {s_{13} + e(3)}
  \end{align}
  are independent of $\hat{\eta}_1,\hat{\eta}_2$ and $\uv(1)$.

  By formulating an equivalent parallel channel with Gaussian input $u(2)$ (resp.~$u(3)$) and output $y'(2)$ (resp.~$y''(3)$), we have
  \begin{align}
    &I(u(2),u(3); y''(2),y''(3)|\Tc)\\
    &= I(u(2); y''(2)|\Tc) + I(u(3); y''(3)|\Tc) \label{eq:term2-1}\\
    &= \E \log\left(1+\frac{|t_{12}|^2}{\Abs{t_{13}}+|e(2)|^2} \right) + \E \log\left(1+\frac{|s_{12}|^2}{\Abs{s_{13}}+|e(3)|^2} \right) \label{eq:term2-4}\\
    &= 2\alpha \log P + O(1) \label{eq:term2-5}
  \end{align}
   where (\ref{eq:term2-1}) is due to the chain rule of mutual information and the independence between $u(2)$ (resp.~$y''(2)$) and $u(3)$ (resp.~$y''(3)$); (\ref{eq:term2-4}) is according to the definition of differential entropy with Gaussian distributed input; the last equality is due to $\E(|t_{12}|^2)=\E(|s_{12}|^2) \sim P^{\alpha}$, $\E(|t_{13}|^2)=\E(|s_{13}|^2) \sim P^{0}$, and $\E(|e(2)|^2) =\E(|e(3)|^2) \sim P^0$.

  Substituting (\ref{eq:term1-2}) and (\ref{eq:term2-5}) into (\ref{eq:mutual-information}), by the definition of DoF
  \begin{align}
    d_1 &= \frac{1}{3} \lim_{P \to \infty} \frac {I(\uv(1),u(2),u(3);\{y(t)\}_{t=1}^3|\Tc)}{\log P},
  \end{align}
  we conclude that $d_1=\frac{2+\alpha}{3}$ is achievable. By symmetry, $d_2$ can be simultaneously obtained in a similar way. This completes the proof.

\subsection{Proof of Lemma~4}
We assume without loss of generality $M\ge N$. For $M\le N$, the similar steps can be accordingly pursued.

  First for the upper bound, we have
  \begin{align}
    \E_{\tilde{\Hm}} \log \det (\Id+\Hm \Km \Hm^\H) &= \E_{\tilde{\Hm}} \log \det (\Id+\Um_{\Hm}  \mathbf{\Sigma}_{\Hm} \Vm_{\Hm}^\H \Km \Vm_{\Hm} \mathbf{\Sigma}_{\Hm} \Um_{\Hm}^\H) \label{eq:Theorem-2-1}  \\
    &=\E_{\tilde{\Hm}} \log \det (\Id+\mathbf{\Sigma}_{\Hm}^2 \Vm_{\Hm}^\H \Km \Vm_{\Hm} ) \label{eq:Theorem-2-2}\\
    &\le \E_{\tilde{\Hm}} \log \det (\Id+\lambda_{\max}(\mathbf{\Sigma}_{\Hm}^2) \Vm_{\Hm}^\H \Km \Vm_{\Hm} ) \\
    &=\sum_{i=1}^N \E_{\tilde{\Hm}} \log  (1+\lambda_{\max}(\Hm \Hm^\H) \lambda_{i}(\Vm_{\Hm}^\H \Km \Vm_{\Hm}))\\
    &\le \sum_{i=1}^N \E_{\tilde{\Hm}} \log  (1+\lambda_{\max}(\Hm \Hm^\H) \lambda_{i}) \label{eq:Theorem-2-3}\\
    &\le \sum_{i=1}^N \E_{\tilde{\Hm}} \log  (1+\Normf{\Hm} \lambda_{i}) \label{eq:Theorem-2-4}\\
    &\le \sum_{i=1}^N \E_{\tilde{\Hm}} \log  (1+(\Normf{\hat{\Hm}}+\Normf{\tilde{\Hm}}) \lambda_{i})\\
    &\le \sum_{i=1}^{N} \log  (1+(\Normf{\hat{\Hm}}+MN \sigma^2) \lambda_i) \label{eq:Theorem-2-5}\\
    &= \sum_{i=1}^{N} \log(1+\Normf{\hat{\Hm}} \lambda_i) + \sum_{i=1}^{N} \log \left( 1 + \frac{MN \sigma^2 \lambda_i}{1+\Normf{\hat{\Hm}}\lambda_i}\right)\\
    &\le \sum_{i=1}^{N} \log(1+\Normf{\hat{\Hm}} \lambda_i) + \sum_{i=1}^{N} \log \left( 1 + \frac{MN \sigma^2}{\Normf{\hat{\Hm}}}\right)\\
    &= \sum_{i=1}^{N} \log(1+\Normf{\hat{\Hm}} \lambda_i) + O(1)
  \end{align}
  where in (\ref{eq:Theorem-2-1}), $\Hm=\Um_{\Hm}  \mathbf{\Sigma}_{\Hm} \Vm_{\Hm}^\H$ with $\mathbf{\Sigma}_{\Hm} \in \CC^{N \times N}$ and $\Vm_{\Hm} \in \CC^{M \times N}$; (\ref{eq:Theorem-2-2}) comes from the equality $\det(\Id+\Am\Bm)=\det(\Id+\Bm\Am)$; (\ref{eq:Theorem-2-3}) is due to Poincare Separation Theorem~\cite{Palomar:03,Magnus:99} that $\lambda_{i}(\Vm_{\Hm}^\H \Km \Vm_{\Hm}) \le \lambda_{i} (\Km)$ for $i=1,\cdots,N$; (\ref{eq:Theorem-2-4}) is from the fact that $\norm{\Hm}_2 \le \normf{\Hm}$; (\ref{eq:Theorem-2-5}) is obtained by applying Jensen's inequality to a concave function.

  For the lower bound, we have
  \begin{align}
    \E_{\tilde{\Hm}} \log \det (\Id+\Hm \Km \Hm^\H) &\ge \E_{\tilde{\Hm}} N \log \left(1+\det(\Hm \Km \Hm^\H)^{1/N}\right) \label{eq:Theorem-2-10} \\
    &= \E_{\tilde{\Hm}} N \log \left(1+\exp\left(\frac{1}{N} \ln \det(\Hm \Km \Hm^\H)\right)\right)\\
    &\ge N \log \left(1+\exp\left(\frac{1}{N} \E_{\tilde{\Hm}} \ln \det(\Hm \Km \Hm^\H)\right)\right) \label{eq:Theorem-2-11}
  \end{align}
  where (\ref{eq:Theorem-2-10}) comes from Minkowski's inequality and (\ref{eq:Theorem-2-11}) is from Jensen's inequality by noticing that $\log(1+e^x)$ is a convex function in $x$. Hence, the expectation part can be simplified further as
  \begin{align}
    \E_{\tilde{\Hm}} \ln \det(\Hm \Km \Hm^\H) &= \E_{\tilde{\Hm}} \ln \det\left( \mathbf{\Phi} \mathbf{\Lambda}  \mathbf{\Phi}^\H \right)\\
    &\ge \E_{\tilde{\Hm}} \ln \det\left( \mathbf{\Phi}' \mathbf{\Lambda}'  \mathbf{\Phi}'^\H \right) \label{eq:Theorem-2-12}\\
    &= \ln \prod_{i=1}^N \lambda_i + \E_{\tilde{\Hm}} \ln \det\left( \mathbf{\Phi}' \mathbf{\Phi}'^\H \right) \\
    &\ge \ln \prod_{i=1}^N \lambda_i + \E_{\tilde{\Hm}} \ln \det\left( \tilde{\Hm} \Vm' \Vm'^\H \tilde{\Hm}^\H \right) \label{eq:Theorem-2-13}\\
    &= \ln \prod_{i=1}^N \lambda_i + \E_{\tilde{\Hm}} \ln \det\left( \tilde{\Hm}' \tilde{\Hm}'^\H \right) \label{eq:Theorem-2-14}\\
    &= \ln \prod_{i=1}^N (\lambda_i \sigma^2) + \sum_{i=1}^N \psi(N-i+1)
  \end{align}
  where $\mathbf{\Phi}=\Hm \Vm \in \CC^{N \times M}$ with $\Vm$ being the unitary matrix containing the eigenvectors of $\Km$, i.e., $\Km=\Vm \mathbf{\Lambda} \Vm^\H$ with $\mathbf{\Lambda}=\diag(\lambda_1,\cdots,\lambda_M)$; in (\ref{eq:Theorem-2-12}), $\mathbf{\Phi}'=\Hm \Vm'\in \CC^{N \times N}$ with $\Vm'$ being the first $N$ columns of $\Vm$ where $\mathbf{\Lambda}'=\diag(\lambda_1,\cdots,\lambda_N)$ and $\mathbf{\Phi} \mathbf{\Lambda}  \mathbf{\Phi}^\H \succeq \mathbf{\Phi}' \mathbf{\Lambda}'  \mathbf{\Phi}'^\H$; (\ref{eq:Theorem-2-13}) is due to the fact that the capacity of a Ricean fading channel is no less than that of a Rayleigh fading channel in all SNR region~\cite{Kim:03,Hosli:05}, by treating $\hat{\Hm}$ as the deterministic line-of-sight component; (\ref{eq:Theorem-2-14}) is because $\tilde{\Hm}$ is independent of $\Vm'$ and $\tilde{\Hm}'$ is an $N \times N$ i.i.d.~complex Gaussian matrix with each entry being zero-mean and variance-$\sigma^2$; the last equation is according to the isotropic assumption and obtained from~\cite{Goodman:1963,Oyman:03}.

  By the definition of $\zeta$, we have
  \begin{align}
    N \log \left(1+\exp\left(\frac{1}{N} \E_{\tilde{\Hm}} \ln \det(\Hm \Km \Hm^\H)\right)\right) &\ge N \log \left(1+\exp\left(\frac{1}{N} \ln \prod_{i=1}^N (\lambda_i \sigma^2 e^{\zeta}) \right)\right)\\
    &= N \log \left(1+\left( \prod_{i=1}^N (\lambda_i \sigma^2 e^{\zeta}) \right)^{\frac{1}{N}}\right)\\
    &\ge N \frac{1}{N} \sum_{i=1}^N \left[\log (\lambda_i \sigma^2 e^{\zeta}) \right]^{+} \\
    &\ge \sum_{i=1}^N \log (1+\lambda_i \sigma^2 e^{\zeta})  - N\\
    &= \sum_{i=1}^N \log (1+\lambda_i \sigma^2 e^{\zeta})  + O(1)
  \end{align}
  where $(x)^{+} \defeq \max\{x,0\}$ and last inequality due to the fact $(\log(x))^{+} \ge \log(1+x)-1$.

  The difference between upper and lower bounds can be further bounded as
  \begin{align}
    \log (1+\norm{\hat{\Hm}}^2 \lambda_i) - \log (1+ \lambda_i \sigma^2  e^{\zeta})  &\le \log\left(\frac{1+\norm{\hat{\Hm}}^2 \lambda_i}{1+ \lambda_i \sigma^2  e^{\zeta}}\right)\\
    &\le \log\left(1+\frac{\norm{\hat{\Hm}}^2 }{\sigma^2  e^{\zeta}}\right)\\
    &\le -\log(\sigma^2) + \log({\norm{\hat{\Hm}}^2+\sigma^2  e^{\zeta}})\\
    &\le -\log(\sigma^2) + \log({\norm{\hat{\Hm}}^2+e^{\zeta}})\\
    &= -\log(\sigma^2) + O(1)
  \end{align}
  where the second inequality is due to the fact $\log\left(\frac{1+a}{1+b}\right) \le \log\left(1+\frac{a}{b}\right)$ and the last equality is because both $\norm{\hat{\Hm}}^2$ and $ e^{\zeta}$ are bounded.
  This completes the proof.

\end{document}